\documentclass[11pt, one column]{article}

\usepackage[english]{babel}
\usepackage[utf8x]{inputenc}
\usepackage[T1]{fontenc}

\usepackage[margin=1in]{geometry}

\usepackage{graphicx} 
\usepackage[colorlinks=False]{hyperref} 
\usepackage{amsmath}  
\usepackage{amsfonts} 
\usepackage{amssymb}  
\usepackage{amsthm}
\usepackage[shortlabels]{enumitem}
\usepackage{todonotes}
\usepackage{multirow}
\usepackage{float}
\usepackage{lipsum} % for filler text
\usepackage{pgfplots}
\usepackage{tikz}
\usetikzlibrary{calc}
\usepackage{authblk}
\usepackage{svg}

\usepackage[numbers,sort&compress]{natbib}
\newcommand{\C}{\mathcal{C}}

\newcommand{\F}{\mathcal{F}}

\newcommand{\numl}{k}

\newcommand{\keywords}[1]{\par\noindent\textbf{Keywords: }#1}

\newtheorem{theorem}{Theorem}[section]

\newtheorem{lemma}[theorem]{Lemma}

\newtheorem{definition}{Definition}[section]
\newtheorem{mechanism}{Mechanism}[section]
\usepackage{academicons}
\usepackage{lineno}

\begin{document}

%%
%% The "title" command has an optional parameter,
%% allowing the author to define a "short title" to be used in page headers.
\title{Strategic Facility Location with Limited Liars}

%%
%% The "author" command and its associated commands are used to define
%% the authors and their affiliations.
%% Of note is the shared affiliation of the first two authors, and the
%% "authornote" and "authornotemark" commands
%% used to denote shared contribution to the research.
\author{Yue Gruszecki$^{0009-0002-8037-5998}$}
\author{Elliot Anshelevich$^{0000-0001-9757-6839}$}
\affil{Department of Computer Science, Rensselaer Polytechnic Institute}
\affil{hany4@rpi.edu, eanshel@cs.rpi.edu}

\date{}

%%
%% By default, the full list of authors will be used in the page
%% headers. Often, this list is too long, and will overlap
%% other information printed in the page headers. This command allows
%% the author to define a more concise list
%% of authors' names for this purpose.

%%
%% The abstract is a short summary of the work to be presented in the
%% article.

%% A "teaser" image appears between the author and affiliation
%% information and the body of the document, and typically spans the
%% page.
% \begin{teaserfigure}
%   \includegraphics[width=\textwidth]{sampleteaser}
%   \caption{Seattle Mariners at Spring Training, 2010.}
%   \Description{Enjoying the baseball game from the third-base
%   seats. Ichiro Suzuki preparing to bat.}
%   \label{fig:teaser}
% \end{teaserfigure}

% \received{20 February 2007}
% \received[revised]{12 March 2009}
% \received[accepted]{5 June 2009}

%%
%% This command processes the author and affiliation and title
%% information and builds the first part of the formatted document.
\maketitle

% \linenumbers

\begin{abstract}
We study Nash equilibria in strategic facility location games where clients are located in an arbitrary metric space. Specifically, there are $n$ clients, and the goal is to choose a facility from a set of given locations, so that the total distance from the clients to the facility is as small as possible. While some of the clients are always truthful, $k$ of them are strategic, and will lie about their location if it benefits them. We quantify how the fraction of strategic clients affects the existence and quality of Nash equilibrium and strong equilibrium solutions, and note that even for relatively large $k$, the properties of these solutions can be much better than the results of fully strategyproof mechanisms. 

For Nash equilibrium, we show that it always exists, and the price of stability is very close to 1. More importantly, we prove that all Nash equilibria are within a factor of at most $\frac{n+2k}{n-2k}$ from the optimum solution, and that this price of anarchy bound is almost tight. While strong equilibrium may not exist for this setting, we prove that it always exists for line metrics, and its cost is at most $\frac{n+k}{n-k}$ times that of optimum. 
\end{abstract}

% \ccsdesc[500]{Do Not Use This Code~Generate the Correct Terms for Your Paper}
% \ccsdesc[300]{Do Not Use This Code~Generate the Correct Terms for Your Paper}
% \ccsdesc{Do Not Use This Code~Generate the Correct Terms for Your Paper}
% \ccsdesc[100]{Do Not Use This Code~Generate the Correct Terms for Your Paper}

%%
%% Keywords. The author(s) should pick words that accurately describe
%% the work being presented. Separate the keywords with commas.
\keywords{Facility Location, Nash Equilibrium, Price of Anarchy, Price of Stability}

\section{Introduction}
\label{section:intro}
Strategic facility location problems have been studied under many settings, see Related Work and \cite{chan2021mechanism} for a survey.
%for example \cite{feldman2016voting, alon2010strategyproof,aziz2022strategyproof,balkanski2024randomized,filos2024distortion,lam2024proportional, caragiannis2016truthful, procaccia2013approximate, walsh2021strategy}. 
We consider the case where clients and facilities are located in some arbitrary metric space, and there is a finite set of possible locations to build a facility. The goal is to choose one facility location to minimize the total distance from all clients to the chosen facility (known as the minisum objective). This applies to scenarios such as building a new post office or hospital in a community, or spatial voting settings where the clients are voters and the facilities are candidates, with voters wanting a candidate which is close to them ideologically. Note that if we know the exact locations of the clients, such problems can be solved easily. Strategic facility location, however, considers the case where clients act strategically: the exact locations of the clients are private, while we know only the reported locations of the clients that may not be their true locations. With strategic clients, most existing work studied {\em strategyproof} mechanisms, i.e., mechanisms where no client has an incentive to report anything other than their true location.

%such that no client gets to improve their cost by not reporting their true location (in other words, ). 

Under the strategic facility location or voting settings, most existing work assumes that all clients can lie about their true locations. This is not always true, however.
First, we may know the true locations of many or most of the clients via other means (e.g., their addresses on record), and thus detect when such clients are mis-reporting. Second, even when all clients are able to lie without being detected, it is likely that a large fraction still may not act strategically even if they can benefit from doing so.
%clients would act strategically if they can benefit from it, but this is likely not true in practice. 
For example, as shown in \cite{groseclose2010sincere}, sophisticated voting in Congress is rare. 
%More surprisingly, even if they have the chance to vote strategically, they sometimes still will not. 
In a laboratory setting, 61.8\% of the voters voted for their true preference regardless of the proposed voting rule \cite{lebon2018sincere}.
%(here we note that in this study, voters can act strategically while also being truthful). \elliot{Ask about, make citations consistent} 
In addition, in matching settings, as shown in \cite{pathak2008leveling}, there will be a fixed constant fraction of players who will always play truthfully 
%and a fixed constant fraction of players who will always play sophisticatedly 
in the Boston mechanism.  This means that even for mechanisms that are not strategyproof, the number of strategic players may be limited, and there will be a set of players that will always act truthfully.  Because of this, in our work we assume that only $k$ out of the $n$ clients are strategic, while the other $n-k$ always tell the truth about their locations. This is either because we know their true locations from a different source, because they are inherently truthful people, or because they do not follow any complex and sophisticated strategies, and telling the truth is the simplest and easiest strategy.

Moreover, we note that strategyproof mechanisms for this setting have some limitations, such as that they result in bad approximation bounds. We define the approximation ratio of a mechanism as the worst-case ratio of the outcome produced by the mechanism to the cost of the optimal solution (in other words, this is a way to evaluate the quality of the outcome of the mechanism as compared with the quality of the optimal solution). When the set of possible facility locations is finite,   the approximation ratio for any deterministic strategyproof mechanism is at least $\Omega(n)$, where $n$ is the number of clients (see results in \cite{schummer2002strategy} and discussion in \cite{alon2009strategyproof}). This holds for general metric spaces, and even for circle networks, where the clients are located on a simple cycle.\footnote{Note that results are much better for strategyproof mechanisms if  facilities can be placed anywhere in the space, instead of only at a finite set of locations. For example, as shown in \cite{gravin2025approximation}, there exist 3-approximation deterministic strategyproof mechanisms when clients are located in $\mathbb{R}^d$ and the distance corresponds to most $L_p$ norms. 
%an $d$-dimensional space with Euclidean norm $L_p, p \in \mathbb{R}_{\geq 1} \cup\{\infty\}$ as a distance function, for any $d > 1$. 
In addition, on a tree network, one can obtain the optimal solution using a deterministic strategyproof mechanism \cite{schummer2002strategy}.}
This bound is significantly worse than the best possible approximation ratio of 3 that strategyproof mechanisms can achieve in one-dimensional space \cite{feldman2016voting}. 
% \yue{This is significantly worse than the best possible approximation ratio of 3 that strategyproof mechanisms can achieve in an $d$-dimensional space with Euclidean norm $L_p, p \in \mathbb{R}_{\geq 1} \cup\{\infty\}$ as a distance function, for any $d > 1$, as shown in \cite{feldman2016voting, walsh2021strategy, gravin2025approximation}.}
% one-dimensional space \cite{feldman2016voting, walsh2021strategy}. 
Thus, despite their obvious benefits, strategyproof mechanisms may not be desirable for metric spaces beyond 1-D.
%This means that in general metric space and with limited facility locations, the approximation ratio for any deterministic strategyproof mechanism would be unbounded. 
In our work, we instead relax the constraint of strategyproofness and study Nash equilibrium and strong equilibrium for non-strategyproof mechanisms. In addition, as discussed above, using the fact that not every client would act strategically, we analyze the stable solutions with the assumption that only $k$ of the clients act strategically. This can also be seen as another relaxation of stategyproofness. As a result, we show that Nash equilibrium always exists and the quality of the equilibria is close to the optimal solution, even for a relatively large fraction (e.g., 20-40\%) of strategic clients, whereas the approximation ratio is $\Omega(n)$ for strategyproof mechanisms. We present the detailed results below.

\begin{figure}[!ht]
    \centering
    %\includesvg[width = 4in]{results.svg}
    \includegraphics[width=3in]{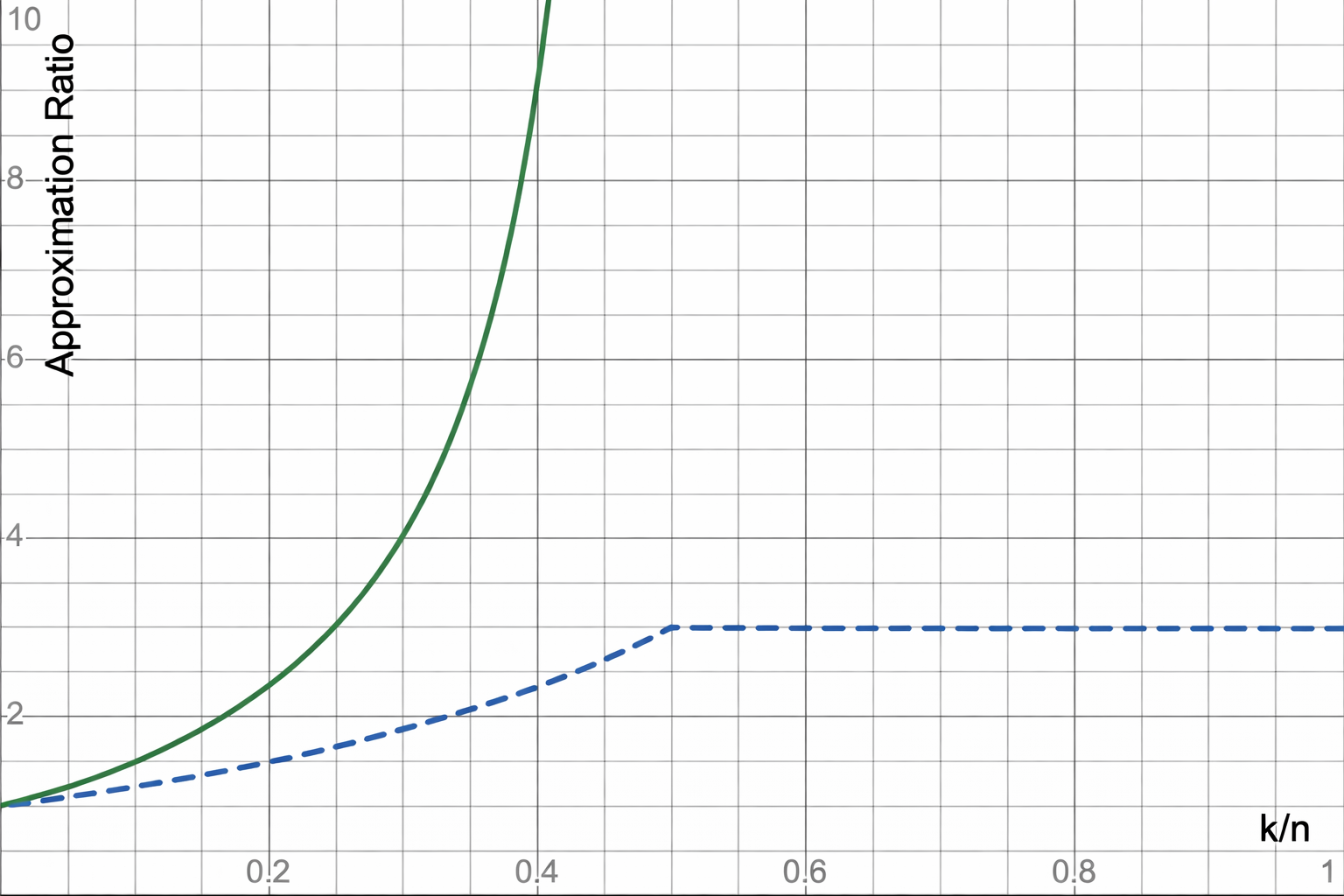}
    \caption{Plots of the approximation ratio w.r.t. $k/n$ under the worst case NE (for general metrics) or SNE (for line metrics). The solid line shows the upper bound for POA w.r.t. $k/n$ while the dashed line shows the upper bound for SPOA w.r.t. $k/n$.}
    \label{fig:results}
\end{figure}

\subsection{Our Contributions}

We analyze the natural mechanism that, given the clients' reported locations, simply chooses the solution minimizing the (perceived) sum of distances from the clients to the chosen facility. Despite its simplicity, we show that this mechanism has nice properties, and in fact no other reasonable deterministic mechanism can achieve better price of anarchy bounds.

We first show that there always exists a (pure) Nash equilibrium (NE) for a general metric space with an arbitrary finite set of given facility locations. We then analyze the quality of such equilibria: the approximation ratio of the Nash equilibrium with smallest cost (known as the price of stability (POS)) is upper bounded by $\frac{n+1}{n-1}$ when $k = 1$ and by $\frac{n+k/(k-1)}{n-k/(k-1)}$ when $k \geq 2$. Note that both of these bounds are upper bounded by $\frac{n+2}{n-2}$, meaning that when $n$ is large, this is {\em extremely} close to 1, and the best NE is almost as good as the optimal solution. More importantly, for the approximation ratio of the Nash equilibrium with the largest cost (known as the price of anarchy (POA)), we prove that it is upper bounded by $\frac{n+2k}{n-2k}$ for $ k < \frac{n}{2}$. %but is unbounded for $k \geq n/2$,
We show that this bound is almost tight, and provide a lower bound of $\frac{n+2k-2}{n-2k+2}$ for the POA of {\em any} reasonable\footnote{By ``reasonable'', we mean anonymous and monotonic, see Theorem \ref{thm:poatight}.} mechanism. The relationship between the fraction of strategic clients w.r.t. the total number of clients and the approximation ratio is shown in Figure \ref{fig:results}. Thus, while strategyproof mechanisms give an $\Omega(n)$ approximation ratio for general metric spaces, relaxing strategyproofness results in all Nash equilibria having an approximation ratio of a small constant, as long as the fraction of strategic clients is not too large (e.g., less than 40 percent). %Even for line metrics, the quality of Nash equilibrium improves on the performance of strategyproof mechanisms when $k$ is less than 25 percent of the total, since the price of anarchy becomes less than 3 at that point.

%Compared to strategyproof mechanisms, which have an unbounded approximation ratio in a general metric space, when the fraction of strategic clients is less than 50\% of the total clients, even the worst NE is bounded, for example, if 40\% of the clients act strategically, the approximate ratio using the worst NE is 9. And when 25\% of the clients act strategically, the approximate ratio using the worst NE is only 3. 

So far, we only considered Nash equilibria, where only one client can alter their reported location at a time. What if a group of clients can collude and alter their reported locations at the same time, however, so that all of them can improve their costs? To address this, we next study strong Nash equilibrium (SNE) \cite{andelman2009strong,epstein2007strong}. Unfortunately, a SNE does not always exist for general metrics; because of this we restrict our attention to one-dimensional line metrics. We prove that SNE always exists for such metrics. Furthermore, we show that for a line metric, the upper bound for the approximation ratio using the best SNE and the worst SNE (known as the strong price of stability (SPOS) and the strong price of anarchy (SPOA), respectively) are both $\frac{n+k}{n-k}$ for $k \leq \frac{n}{2}$ and 3 for $k > \frac{n}{2}$. Note that this is a significantly better result than POA on a line. Compare this also to the best achievable bound of 3 for any strategyproof mechanism for a line metric \cite{feldman2016voting, walsh2021strategy}. We show that any SNE would give an approximation bound strictly smaller than 3 when less than half of the clients act strategically, and when $k$ is only 25 percent, the approximation ratio becomes only $\frac{5}{3}$ (see Figure \ref{fig:results}).

%The relationship between the fraction of strategic clients w.r.t. the total number of clients and the approximation ratio is shown in Figure \ref{fig:results}.

\subsection{Related Work}
Strategic facility location problems have been studied for decades, see for example, \cite{filos2024distortion, tang2023strategyproof, balkanski2024randomized, feldman2016voting, alon2010strategyproof,aziz2022strategyproof,lam2024proportional, caragiannis2016truthful, procaccia2013approximate, walsh2021strategy, aloupis2010improved}, and see  \cite{chan2021mechanism} for a survey. Most of such work focused on strategyproof mechanisms for choosing one facility. On a line, when the facility can be placed anywhere, choosing the median client is a strategyproof mechanism; it is optimal for the minisum objective, and is a 2-approximation for minimax, which is also the best achievable among deterministic strategyproof mechanisms \cite{procaccia2013approximate}. Beyond a line, as shown in \cite{gravin2025approximation}, there exist 3-approximation deterministic strategyproof mechanisms for clients located in $\mathbb{R}^d$ and most $L_p$ metrics. In addition, on a tree network, one can obtain the optimal solution using a deterministic strategyproof mechanism \cite{schummer2002strategy}.
We consider a more general version of strategic facility location problems, where facilities can only be chosen from a set of given locations instead of anywhere in the space. Note that while we define our model as having a finite set of possible facility locations, all of our upper bounds (with one exception) apply to the case when the set of possible facility locations is infinite as well, and thus include the case when facilities can be placed anywhere as a special case. (The exception to this is the POS bound in Theorem \ref{thm:gen_pos}, which only holds when the set of facility locations is finite.) For this setting where not all facility locations are possible, \cite{feldman2016voting, walsh2021strategy} showed that no deterministic strategyproof mechanism can achieve an approximation ratio better than 3 for either minimax or minisum objectives, even for clients located on a line. 

For metric spaces beyond a line metric, it is shown in \cite{ schummer2002strategy} that the approximation ratio for any deterministic strategyproof mechanism is at least $\Omega(n)$ for minisum in circle networks, where $n$ is the number of clients. %This shows that the approximation ratio can be unbounded for general metric spaces for any deterministic strategyproof mechanism. 
However, using randomized algorithms can improve the result in general metric spaces with limited facility locations. As is shown in \cite{anshelevich2017randomized}, random dictatorship can achieve an approximation ratio of $3 - \frac{2}{n}$ for minisum in a general metric space, which improves to $2-\frac{2}{n}$ if facilities can be placed on top of any client \cite{alon2009strategyproof}. Due to the incompatibility of strategyproofness and other properties such as proportional fairness, \cite{lam2024proportional, aziz2022strategyproof} also analyze the properties and qualities of Nash equilibria under different mechanisms. Much like their work, in this paper, instead of being restricted to deterministic strategyproof mechanisms, we study the properties and stability of a deterministic mechanism w.r.t. Nash equilibria and strong Nash equilibria. As a result, we are able to obtain small approximation ratios when the number of strategic clients is limited.

To our knowledge, no previous work has been done studying facility location with a limited number of strategic players. Such considerations have been previously studied in a matching setting \cite{pathak2008leveling}, which looks at equilibria with sets of truthful players and sophisticated players in the Boston mechanism. 

% Even though under the strategic facility location or voting setting, most studies assume that all clients would act strategically if they can benefit from it, it is not true in practice. For example, \cite{groseclose2010sincere} shows that sophisticated voting in Congress is rare. More surprisingly, even if they have the chance to vote strategically, they sometimes still will not. \cite{lebon2018sincere} shows that in a laboratory setting, the single-vote system leads to more voters acting sincerely than multi-vote rules. In fact, 61.8\% of the voters are sincere regardless of the proposed voting rule (here we note that in this study, voters can act strategically while also being sincere). In addition, \cite{pathak2008leveling} shows that there will be a fixed constant of players who will always play sincerely and a fixed constant of players who will always play sophisticatedly in the Boston mechanism. Therefore, in this work, we consider the case where not all clients/voters would act strategically, but only a fraction of them, while the others would always act sincerely. 

\section{Preliminaries and Notation}
\label{section:prelim}
Consider facility location problems where we are given a set of $n$ client locations $\C$, and a finite\footnote{Note that all of our upper bounds hold for infinite sets as well, except for the POS bound in Theorem \ref{thm:gen_pos} which uses the finite assumption.} set of possible facility locations $\F$. All of these are located in a metric space with distance $d$, with multiple clients allowed to be located at the same point. For tie-breaking purposes, we assume that there is some fixed
ordering for the facility locations $\F$; when a location comes before another in this ordering, we will say that it is {\em lower-numbered}.  (If all clients and facilities are located on a line, we assume that the facility locations are simply ordered from left to right, with the leftmost facility being the lowest-numbered and the rightmost facility being the highest-numbered.)

The goal is to place a facility that would minimize the total distance from each client to the facility. Formally, for choosing location $A$, we define the cost of each client $i$ as the distance between $i$ and $A$, $d(i,A)$, and the {\em optimal solution} is a facility location $A$ that minimizes $\sum_{i\in \C}d(i,A).$

The clients can act strategically. However, only $\numl$ of them can misreport their location (by instead reporting any location in the metric space), while the others will always report truthfully. We denote the set of clients that act strategically by $L$ and the set of clients that always act truthfully by $T$. Note that we have $|T| = n - \numl, |L| = \numl, T \cap L = \emptyset, T \cup L = \C$. With some abuse of notation, for client $i \in \C$, we will denote $i$'s reported location by $x_i$ and their true location by $i$. Let $\vec{t}$ be the vector of locations of clients in $T$, and $\vec{x}$ be the vector of the {\em reported} locations of clients in $L$. Then, in order to choose a facility, the information available to our mechanism is exactly $(\vec{t}; \vec{x})$, although the mechanism does not know which clients are truthful and which are strategic. Based on this available information, we define the {\em perceived cost} of a facility $A$ with respect to reported locations $(\vec{t}; \vec{x})$ to be 

$$cost_A(\vec{t};\vec{x}) = \sum_{i \in T}d(i,A) + \sum_{i \in L}d(x_i,A).$$

The mechanism which we use in our paper is extremely simple: it just picks the location with smallest perceived cost (in case of ties it chooses the lower-numbered location in $\F$). Note that while this mechanism is simple and intuitive, we show that it has the best possible price of anarchy guarantees, as compared with any other reasonable mechanism. Because of this, we believe that there is no reason to form a more complex mechanism, when this one already performs the best with respect to equilibrium quality, and has the advantage of being natural and intuitive. Finally, we remark that while our mechanism is defined for general metric spaces, even for a line metric it is {\em not} the same as the well-known median mechanism, which chooses the facility closest to the median agent. In fact, the median mechanism on a line can form solutions which are a factor of 3 away from optimum, while we show our mechanism significantly improves on this factor even for relatively large $k$. We call this mechanism $GM^*$, since when facilities can be chosen anywhere in the space, this mechanism would choose the geometric median of the reported client locations. We formally define $GM^*$ as follows. 

\begin{mechanism}
	Given the reported client locations, $(\vec{t}; \vec{x})$, our mechanism $GM^*$ returns the facility $A$ minimizing $cost_A(\vec{t};\vec{x})$.
    If there is a tie, choose the facility with the lowest number w.r.t. to the facility ordering.
\label{mech:opt}
\end{mechanism}

We use $GM^*(\vec{t}; \vec{x})$ to denote the outcome of $GM^*$ given reported locations $(\vec{t}; \vec{x})$. We will also sometimes use notation 
$$cost_{GM^*}(\vec{t};\vec{x}) = \sum_{i \in T}d(i,C) + \sum_{i \in L}d(x_i,C),$$
where $C = GM^*(\vec{t};\vec{x})$ to refer to the cost of the outcome of $GM^*$.

%For simplicity, we denote the input of $GM^*$ by $(\vec{t}; \vec{x})$ where $\vec{t}$ is the vector of locations of clients in $T$ and $\vec{x}$  is the vector of the reported locations of clients in $L$. In addition, with some abuse of notation, for client $i \in \C$, we will denote $i$'s reported location by $x_i$ and their true location by $i$. 
When dealing with equilibria and some clients changing their reported locations, we will often use the following standard notation.
We denote the $(k-1)-$dimensional vector of reported locations of all clients in $L\setminus\{i\}$, by $\vec{x}_{-i}$. For example, the perceived cost of $A$ when all clients in $L$ report location $x_i$, but client $i$ reports location $x_i'$, is
$$cost_A(\vec{t};\vec{x}_{-i},x_i') = \sum_{j \in T}d(j,A) + \sum_{j \in L\setminus\{i\}}d(x_j,A) + d(x_i',A).$$
Similarly, for any set of clients $I$, $\vec{x}_{-I}$ refers to the vector of $x_i$ values for clients in $L\setminus I$. 

We are interested in quantifying the quality of equilibrium solutions for our setting.
Recall the definitions of (pure) Nash equilibrium and strong Nash equilibrium, as well as the notions to evaluate their quality:

\begin{definition}
	Let $S_i$ be the set of all possible locations client $i$ can report and $M$ be a mechanism. We say that $(\vec{t}; \vec{x})$ is a Nash equilibrium (NE) if for each client $i \in L$, for all $x'_i \in S_i$, we have
	$$d(i,M(\vec{t}; \vec{x}_{-i},x'_i)) \geq d(i,M(\vec{t}; \vec{x})).$$
\end{definition}  

\begin{definition}
	Let $S_I$ be the set of sets of all possible locations that a set of clients $I$ can report and $M$ be a mechanism. We say that $(\vec{t}; \vec{x})$ is a Strong Nash equilibrium (SNE) if for each set of clients $I \subseteq L$, for all $x'_I \in S_I$, there exists a client $i \in I$ such that
	$$d(i,M(\vec{t}; \vec{x}_{-I},x'_I)) \geq d(i,M(\vec{t}; \vec{x})).$$
\end{definition}  

\begin{definition}
    Let $S$ be the set of all facilities $A$ such that there exists a NE under mechanism $M$ which selects $A$. Let $O$ be the optimal solution. We define the price of stability (POS) as
    $$\inf_{A\in S} \frac{\sum_{i \in \C}d(i, A)}{\sum_{i \in \C}d(i, O)},$$
    and the price of anarchy (POA) as
    $$\sup_{A\in S} \frac{\sum_{i \in \C}d(i, A)}{\sum_{i \in \C}d(i, O)}.$$
\end{definition}

\begin{definition}
    Let $S$ be the set of all facilities $A$ such that there exists a SNE under mechanism $M$ which selects $A$. Let $O$ be the optimal solution. We define the strong price of stability (SPOS) as
    $$\inf_{A\in S} \frac{\sum_{i \in \C}d(i, A)}{\sum_{i \in \C}d(i, O)},$$
    and the strong price of anarchy (SPOA) as
    $$\sup_{A\in S} \frac{\sum_{i \in \C}d(i, A)}{\sum_{i \in \C}d(i, O)}.$$
\end{definition}

The above definitions provide a measure of Nash equilibrium and strong Nash equilibrium quality for a particular instance. As standard in the literature, we will also refer to the (strong) price of stability or price of anarchy of a {\em mechanism} as the worst-case PoS or PoA over all possible input instances.  

% In addition, we define the cost of choosing a facility as follows:

% \begin{definition}
%     Let
%     $$cost_A(\vec{t};\vec{x}) = \sum_{i \in T}d(i,A) + \sum_{i \in L}d(x_i,A),$$
%     $$cost_A(\vec{t};\vec{x}_{-i},B) = \sum_{j \in T}d(j,A) + \sum_{j \in L\setminus\{i\}}d(x_j,A) + d(B,A),$$
%     $$cost_{GM^*}(\vec{t};\vec{x}) = \sum_{i \in T}d(i,C) + \sum_{i \in L}d(x_i,C),$$
%     where $C = GM^*(\vec{t};\vec{x})$.
% \end{definition}

\section{Nash Equilibrium}
\label{sec:ne}
We first consider Nash equilibrium in a general metric space with $\numl$ liars. We begin by observing the following useful lemma.

\begin{lemma}
	For any $\vec{x}$ with $GM^*(\vec{t};\vec{x}) = B$, and for any client $i\in L$, we have that 
    $cost_{GM^*}(\vec{t};\vec{x}_{-i},B)$ $ = cost_{B}(\vec{t};\vec{x}_{-i},B)$.
    In other words, if $B$ is the winner with reported locations $\vec{x}$, then client $i$ can report being directly at $B$, so that either $B$ is still the winner, or ties with the new winner. This also means that for all $C\in\F$, $cost_{B}(\vec{t};\vec{x}_{-i},B) \leq cost_{C}(\vec{t};\vec{x}_{-i},B)$.	\label{lemma:lie-on-top}
\end{lemma}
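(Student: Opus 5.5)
The plan is to compare, for an arbitrary facility $C\in\F$, how the perceived costs of $B$ and $C$ change when client $i$ switches its report from $x_i$ to $B$. Only the $i$-th term of the perceived cost changes, so for every $A\in\F$
$$cost_A(\vec{t};\vec{x}_{-i},B) = cost_A(\vec{t};\vec{x}) - d(x_i,A) + d(B,A).$$
Applying this with $A=B$ gives $cost_B(\vec{t};\vec{x}_{-i},B) = cost_B(\vec{t};\vec{x}) - d(x_i,B)$, since $d(B,B)=0$. Applying it with $A=C$ gives $cost_C(\vec{t};\vec{x}_{-i},B) = cost_C(\vec{t};\vec{x}) - d(x_i,C) + d(B,C)$.

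The key inequality comes from the triangle inequality $d(x_i,B)\le d(x_i,C)+d(C,B)$, rearranged as $-d(x_i,C)+d(B,C)\ge -d(x_i,B)$. We also use that $B=GM^*(\vec{t};\vec{x})$, so $cost_C(\vec{t};\vec{x})\ge cost_B(\vec{t};\vec{x})$. Together these give
$$cost_C(\vec{t};\vec{x}_{-i},B)\ \ge\ cost_B(\vec{t};\vec{x}) - d(x_i,B) = cost_B(\vec{t};\vec{x}_{-i},B),$$
which is the second claim, for all $C\in\F$.

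The first claim follows at once. Since $B$ attains the minimum perceived cost over $\F$ under the new reports, the winner $C'=GM^*(\vec{t};\vec{x}_{-i},B)$ has the same minimum value, so $cost_{GM^*}(\vec{t};\vec{x}_{-i},B)=cost_{C'}(\vec{t};\vec{x}_{-i},B)=cost_B(\vec{t};\vec{x}_{-i},B)$. Either $C'=B$, or $C'$ is a lower-numbered facility that ties with $B$. This matches the verbal statement "either $B$ is still the winner, or ties with the new winner."

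There is no real obstacle. The only care needed is to keep the tie-breaking remark consistent with what is claimed: the lemma asserts equality of costs, not that $B$ itself wins. Nothing about tie-breaking is needed beyond the fact that the winner has minimum perceived cost.
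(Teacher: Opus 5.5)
Your proof is correct and is essentially the paper's argument: only the $i$-th term of the perceived cost changes, so optimality of $B$ under $\vec{x}$ plus one triangle inequality gives $cost_B(\vec{t};\vec{x}_{-i},B)\le cost_C(\vec{t};\vec{x}_{-i},B)$ for every $C\in\F$. One small slip: the inequality you actually use, $-d(x_i,C)+d(B,C)\ge -d(x_i,B)$, comes from the triangle inequality $d(x_i,C)\le d(x_i,B)+d(B,C)$ (the instance the paper uses), not from rearranging $d(x_i,B)\le d(x_i,C)+d(C,B)$ as you wrote; since the inequality you use is true, the argument still goes through.
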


\begin{proof}
	Since $GM^*(\vec{t};\vec{x}) = B$, we have that for any $C \in \F$, 
	\[
	\begin{aligned}
		\sum_{j\in T} d(x_j,B) + \sum_{j \in L \setminus \{i\}} d(x_j, B) + d(x_i, B) &\leq \sum_{j\in T} d(x_j,C)+ \sum_{j \in L \setminus \{i\}} d(x_j, C) \\
        &+ d(x_i, C)
	\end{aligned}
	\]
	Rearrange the above inequality, and we have 
	\[
	\begin{aligned}
		\sum_{j\in T} d(j,B) + \sum_{j \in L \setminus \{i\}} d(x_j, B)  &\leq \sum_{j\in T} d(j,C) + \sum_{j \in L \setminus \{i\}} d(x_j, C) + d(x_i, C) - d(x_i, B)\\
		&\leq \sum_{j\in T} d(j,C) + \sum_{j \in L \setminus \{i\}} d(x_j, C) + d(x_i, B)+ d(B,C) - d(x_i, B)\\
		&= \sum_{j\in T} d(j,C) + \sum_{j \in L \setminus \{i\}} d(x_j, C) + d(B,C)
	\end{aligned}
	\]
	The last two inequalities hold because all clients and facilities are in some metric space so we can utilize triangle inequality. Note that the above inequality means 
	$$\sum_{j\in T} d(j,B) +\sum_{j \in L \setminus \{i\}} d(x_j, B) + d(B,B)  \leq \sum_{j\in T} d(j,C) + \sum_{j \in L \setminus \{i\}} d(x_j, C) + d(B,C)$$
	for all $C \in \F$, and thus $cost_B(\vec{t};\vec{x}_{-i},B)\leq cost_{C}(\vec{t};\vec{x}_{-i},B),$ as desired. We can then conclude that either $GM^*(\vec{t};\vec{x}_{-i},B) = B$ or $GM^*(\vec{t};\vec{x}_{-i},B) = B'$, with $cost_{B'}(\vec{t};\vec{x}_{-i},B) = cost_{B}(\vec{t};\vec{x}_{-i},B)$. 
\end{proof}

% \begin{proof}
% 	Since $GM^*(\vec{t};\vec{x}) = B$, we have that for any $C \in \F$, 
% 	\[
% 	\begin{aligned}
% 		\sum_{j\in T} d(x_j,B) + \sum_{j \in L \setminus \{i\}} d(x_j, B) + d(x_i, B) \leq \sum_{j\in T} d(x_j,C)+ \sum_{j \in L \setminus \{i\}} d(x_j, C) + d(x_i, C)
% 	\end{aligned}
% 	\]
% 	Rearrange the above inequality, and we have 
% 	\[
% 	\begin{aligned}
% 		\sum_{j\in T} d(j,B) + \sum_{j \in L \setminus \{i\}} d(x_j, B)  &\leq \sum_{j\in T} d(j,C) + \sum_{j \in L \setminus \{i\}} d(x_j, C) + d(x_i, C) - d(x_i, B)\\
% 		&\leq \sum_{j\in T} d(j,C) + \sum_{j \in L \setminus \{i\}} d(x_j, C) + d(x_i, B) + d(B,C)- d(x_i, B)\\
% 		&= \sum_{j\in T} d(j,C) + \sum_{j \in L \setminus \{i\}} d(x_j, C) + d(B,C)
% 	\end{aligned}
% 	\]
% 	The last two inequalities hold because all clients and facilities are in some metric space so we can utilize triangle inequality. Note that the above inequality means 
% 	$$\sum_{j\in T} d(j,B) +\sum_{j \in L \setminus \{i\}} d(x_j, B) + d(B,B)  \leq \sum_{j\in T} d(j,C) + \sum_{j \in L \setminus \{i\}} d(x_j, C) + d(B,C)$$
% 	for all $C \in \F$, and thus $cost_B(\vec{t};\vec{x}_{-i},B)\leq cost_{C}(\vec{t};\vec{x}_{-i},B),$ as desired. We can then conclude that either $GM^*(\vec{t};\vec{x}_{-i},B) = B$ or $GM^*(\vec{t};\vec{x}_{-i},B) = B'$, with $cost_{B'}(\vec{t};\vec{x}_{-i},B) = cost_{B}(\vec{t};\vec{x}_{-i},B)$. 
% \end{proof}

Using the above lemma, we are able to show that if $O^* = GM^*(\vec{t})$ (i.e., $GM^*$ applied only to the locations of $T$), then all strategic agents in $L$ reporting their location to be at $O^*$ results in a Nash equilibrium. This proves the following theorem.

\begin{theorem}
	Using Mechanism $GM^*$, there always exists a Nash equilibrium under any metric space with distance $d$.
\label{thm:neexist}
\end{theorem}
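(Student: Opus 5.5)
The plan is to exhibit an explicit equilibrium. Let $O^* = GM^*(\vec{t})$ be the facility that $GM^*$ selects using only the truthful clients' locations, with the usual tie-breaking. The candidate profile has every strategic client report $x_i = O^*$. I will show two things. First, the outcome under this profile is $O^*$. Second, for $\numl \geq 2$, no single liar can change the outcome at all, so no liar has a profitable deviation. The case $\numl = 1$ needs a separate argument, given at the end.

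\emph{Step 1: the outcome is $O^*$.} For any $A \in \F$,
\[
cost_A(\vec{t};\vec{x}) = cost_A(\vec{t}) + \numl\, d(O^*,A) \geq cost_{O^*}(\vec{t}) = cost_{O^*}(\vec{t};\vec{x}),
\]
so $O^*$ is a minimizer. Now suppose some $A$ ties with $O^*$. Then $d(O^*,A)=0$ and $cost_A(\vec{t}) = cost_{O^*}(\vec{t})$, so $A$ also ties with $O^*$ in the truthful-only instance. By the choice of $O^*$, $A$ is then higher-numbered than $O^*$, so the tie-break selects $O^*$. The key principle here is that any tie in the new instance is also a tie in the $\vec{t}$-only instance.

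\emph{Step 2: deviations for $\numl \ge 2$.} Suppose liar $i$ reports some arbitrary $x'$ while the other $\numl-1$ liars stay at $O^*$. For any $A$, the triangle inequality gives $d(x',A) \geq d(x',O^*) - d(O^*,A)$. Hence
\[
cost_A(\vec{t};\vec{x}_{-i},x') \ge cost_{O^*}(\vec{t}) + (\numl-2)\,d(O^*,A) + d(x',O^*) = cost_{O^*}(\vec{t};\vec{x}_{-i},x') + (\numl-2)\,d(O^*,A).
\]
When $\numl\ge 2$, the last term is nonnegative, so $O^*$ is still a minimizer. If some $A$ ties with $O^*$, equality must hold in the first step, which forces $cost_A(\vec{t}) = cost_{O^*}(\vec{t})$. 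Then the same tie-breaking argument as in Step 1 shows $O^*$ is still selected. So the outcome never changes, and in particular no liar strictly improves.

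\emph{Step 3: the case $\numl = 1$.} This is where the ``everyone reports $O^*$'' construction can fail, since a lone liar may be able to move the facility. I would instead use finiteness of $\F$. Let $R = \{GM^*(\vec{t};x') : x' \text{ any location}\}$ be the set of achievable outcomes. The liar $i$ picks the facility $B \in R$ minimizing $d(i,B)$, and reports a location $x$ achieving $B$. Any unilateral deviation yields an outcome in $R$, which is by definition no closer to $i$, so this profile is a NE. Lemma~\ref{lemma:lie-on-top} lets one take the report to be $B$ itself, though this is not needed.

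I expect the main obstacle to be getting the tie-breaking exactly right in Steps 1 and 2. The resolution is that every tie in the modified instance forces equality of the truthful-only costs, so the tie-breaking that defines $O^*$ settles it.
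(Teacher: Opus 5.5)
Your proof is correct and follows the paper's approach: all liars report $O^* = GM^*(\vec{t})$ when $k \geq 2$, and for $k=1$ the liar reports whatever yields their best achievable outcome. The only difference is in the deviation step, and it is a streamlining rather than a new idea. You bound $cost_A$ directly for an arbitrary report $x'$ via $d(x',A) \geq d(x',O^*) - d(O^*,A)$, and you settle every tie at once by observing that a tie forces $cost_A(\vec{t}) = cost_{O^*}(\vec{t})$. The paper instead uses Lemma~\ref{lemma:lie-on-top} to reduce to the deviator reporting the new winner, and then handles $k=2$ as a separate tie-breaking case.
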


% \begin{lemma}
% 	When , there exists a Nash equilibrium under $GM^*$.
% 	\label{lemma:ne1} 
% \end{lemma}
\begin{proof}
First, consider the case when $k = 1$. Let $L = \{i\}$, and let $x_i$ be the location which minimizes $d(GM^*(\vec{t};x_i),i)$, i.e., the location which, if reported by $i$, results in the smallest cost for $i$ under the mechanism $GM^*$. Then $x_i$ is trivially a Nash equilibrium for the case when  $k = 1$.

Now suppose that $k>1$. Let $O^* = GM^*(\vec{t})$ (i.e., $GM^*$ applied only to the locations of $T$) and $\vec{x} = (O^*, O^*, \cdots, O^*)$. We will prove that $cost_{GM^*}(\vec{t};\vec{x}) = cost_{O^*}(\vec{t};\vec{x})$, and thus $\vec{x}$ is a Nash equilibrium for $k > 1$ under $GM^*$. 
	
We note that since $O^* = GM^*(\vec{t})$, we have that 
	$$O^* = \text{argmin}_{B \in \F}\sum_{i \in T}d(i,B).$$
	With all the liars reporting the same location $O^*$, $\vec{x} = (O^*, O^*, \cdots, O^*)$, it is obvious that 
	$$O^* = \text{argmin}_{B \in \F}cost_B(\vec{t};\vec{x}).$$
	 Let $O'= GM^*(\vec{t};\vec{x})$, we will show that $O' = O^*$. Assume otherwise, $O'\neq O^*$. We observe that 
     $$cost_{O^*}(\vec{t};\vec{x}) = \sum_{j \in T} d(j, O^*) + \sum_{j \in L}d(O^*,O^*),$$
     $$cost_{O'}(\vec{t};\vec{x}) = \sum_{j \in T} d(j, O') + \sum_{j \in L}d(O',O^*).$$
     Since $O'\neq O^*$, we have that $\sum_{j \in L}d(O',O^*) > 0$. Since $O^* = GM^*(\vec{t})$, we also know that $\sum_{j \in T} d(j, O^*)\leq \sum_{j \in T} d(j, O').$ Thus, we have that $cost_{O'}(\vec{t};\vec{x}) > cost_{O^*}(\vec{t};\vec{x})$, meaning that  $GM^*(\vec{t};\vec{x}) \neq O'$ as the cost for $O^*$ is strictly smaller than $O'$. This is a contradiction. Therefore, we must have $O' = O^*, GM^*(\vec{t};\vec{x}) = O^*$.
      
    We will now show that no client can alter their reported location from $\vec{x}$ so that another location would win. Suppose to the contrary, that there exists client $i \in L$ such that $GM^*(\vec{t};\vec{x}_{-i}, x'_i) = A\neq O^*$. By Lemma \ref{lemma:lie-on-top}, we can assume w.l.o.g. that $GM^*(\vec{t};\vec{x}_{-i}, A) = A'$, with $cost_A(\vec{t};\vec{x}_{-i}, A) = cost_{A'}(\vec{t};\vec{x}_{-i}, A)$.
     % \sum_{j\in T}d(j, a') + \sum_{j\in T\setminus \{i\}}d(x_j, a') + d(a, a') = \sum_{j\in T}d(j, a) + \sum_{j\in T\setminus \{i\}}d(x_j, a)$.
     However, since we know that $O^* = \text{argmin}_{B \in \F}\sum_{i \in T}d(i,B), $ we have that
	 % By Lemma \ref{lemma:lie-on-top}, we can assume w.l.o.g. we can assume for contradiction that for any $i$, they can alter the optimal solution to be $a_i$ by reporting their location at $a_i$. However, since we know that $o^*$ is the optimal solution for $T$, we have that
	\[
	\begin{aligned}
        cost_{O^*}(\vec{t};\vec{x}_{-i},A) &=
		\sum_{j \in T} d(j, O^*) + \sum_{j \in L\setminus\{i\}}d(O^*,O^*) + d(A, O^*)\\ 		
		&\leq \sum_{j \in T} d(j, A)  + d(A, O^*)\\
		&\leq \sum_{j \in T} d(j, A) + \sum_{j \in L\setminus\{i\}}d(A,O^*)\\
		&= cost_A(\vec{t};\vec{x}_{-i},A)\\
        &= cost_{A'}(\vec{t};\vec{x}_{-i},A)
	\end{aligned}
	\]
	Note that when $k > 2$, the last inequality is strictly less than, which means $GM^*(\vec{t};\vec{x}_{-i},A) \neq A'$, since $O^*$ has a strictly better perceived cost, giving us a contradiction. We then consider the case when $k = 2$. If $\sum_{j \in T} d(j, O^*) < \sum_{j \in T} d(j, A)$, then the first inequality above is strict and thus $GM^*(\vec{t};\vec{x}_{-i},A) \neq A'$, so assume otherwise, $\sum_{j \in T} d(j, O^*) = \sum_{j \in T} d(j, A)$. We first note that since $O^* = GM^*(\vec{t}), \sum_{j \in T} d(j, O^*) = \sum_{j \in T} d(j, A)$, this must mean that $O^*$ is lower numbered than $A$. %We will then show that $A = A'$. Assume otherwise, $A \neq A'$. 
    Now consider the cost of $A'$.
    For the case when $k=2$,
    $$cost_A(\vec{t};\vec{x}_{-i},A) = \sum_{j \in T} d(j, A)  + d(A, O^*),$$
    $$cost_{A'}(\vec{t};\vec{x}_{-i},A) = \sum_{j \in T} d(j, A')  + d(A', O^*) + d(A, A'),$$
    Since $O^* = \text{argmin}_{B \in \F}\sum_{i \in T}d(i,B),$ $ \sum_{j \in T} d(j, O^*) = \sum_{j \in T} d(j, A)$, we have that $\sum_{j \in T} d(j, A) 
    \leq\sum_{j \in T} d(j, A')$. By triangle inequality, we also have that $d(A,O^*)\leq d(A,A')+d(A',O^*)$. But by our definition of $A'$, we also know that $cost_A(\vec{t};\vec{x}_{-i}, A) = cost_{A'}(\vec{t};\vec{x}_{-i}, A)$. Thus, it must be that $\sum_{j \in T} d(j, O^*) = \sum_{j \in T} d(j, A) = \sum_{j \in T} d(j, A')$. Therefore, since $A'\neq GM^*(\vec{t})$, it must be that $O^*$ is lower numbered than $A'$. Finally, for this case when $k=2$, we also have that $cost_{O^*}(\vec{t};\vec{x}_{-i},A) = cost_A(\vec{t};\vec{x}_{-i},A) = cost_{A'}(\vec{t};\vec{x}_{-i},A)$, so  $O^*$ would still be picked by our mechanism instead of $A'$. 
    
    % %Since we have that $A \neq A'$, we must have $d(A, A') > 0$. In addition, since $O^* = \text{argmin}_{B \in \F}\sum_{i \in T}d(i,B),$ $ \sum_{j \in T} d(j, O^*) = \sum_{j \in T} d(j, A)$, we have that $\sum_{j \in T} d(j, A) 
    % \leq\sum_{j \in T} d(j, A')$. Therefore, we can see that $cost_A(\vec{t};\vec{x}_{-i},A) < cost_{A'}(\vec{t};\vec{x}_{-i},A)$, but $GM^*(\vec{t};\vec{x}_{-i},A) \neq A'$, this is a contradiction. Therefore, we have that $A = A'$. Recall that $O^*$ is lower numbered than $A$, when $k = 2, \sum_{j \in T} d(j, O^*) = \sum_{j \in T} d(j, A)$, we would have $cost_{O^*}(\vec{t};\vec{x}_{-i},A) = cost_A(\vec{t};\vec{x}_{-i},A)$. Hence, $GM^*(\vec{t};\vec{x}_{-i},A) \neq A = A'$ since $O^*$ is lower numbered.  
    
    Therefore, we can conclude that no client can alter the outcome of $GM^*$ by altering their reported location $x_i$ to $x_i'$. Hence, $\vec{x} = (O^*, O^*, \cdots, O^*)$ is a Nash equilibrium.
\end{proof}

Now that we have shown that Nash equilibrium always exists for our setting, we proceed to analyze the quality of Nash equilibria. We first show that the price of stability is extremely small, very close to 1 for arbitrary $k$, as long as $n$ is reasonably large. For example, if $n=100$, then the price of stability is only about $1.04$. This shows that if we can choose a Nash equilibrium (e.g., by choosing starting strategies for the clients, or by choosing the starting winning location wisely, but allowing the clients to change their report), then we can always form a solution which is essentially optimal.

\begin{theorem}
	Using Mechanism $GM^*$, for any finite $\F$, the price of stability is at most $\frac{n+1}{n-1}$ when $k = 1$ and is at most $\frac{n+\frac{k}{k-1}}{n -\frac{k}{k-1}}$ when $k > 1$. Thus, for arbitrary $k$, the price of stability is at most $\frac{n+2}{n-2}$.
	\label{thm:gen_pos}
\end{theorem}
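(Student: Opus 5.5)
The plan is to exhibit, for every instance, one Nash equilibrium whose true cost $cost(A)=\sum_{i\in\C}d(i,A)$ is close to $cost(O)$, where $O$ is the optimal facility. The target is the inequality
$$cost(A)-cost(O)\le c\, d(A,O),\qquad c=1 \text{ if } k=1,\quad c=\tfrac{k}{k-1}\text{ if } k\ge 2 .$$
This suffices. By the triangle inequality, $n\,d(A,O)\le\sum_{i\in\C}(d(i,A)+d(i,O))=cost(A)+cost(O)$. Substituting gives $cost(A)(1-c/n)\le cost(O)(1+c/n)$, i.e. the ratio $\frac{n+c}{n-c}$. Since $c\le 2$ in all cases, this also yields the uniform bound $\frac{n+2}{n-2}$. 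The rest of the proof is therefore about producing an equilibrium outcome $A$ satisfying this inequality.

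\textbf{Case $k=1$.} Let $L=\{i\}$ and take the equilibrium of Theorem \ref{thm:neexist}, in which $i$ reports the location $x_i$ that is best for $i$. Its outcome $A$ satisfies $d(i,A)\le d(i,O')$, where $O'=GM^*(\vec t;i)$ is the outcome of truthful reporting. The key observation is that truthful reporting makes the perceived cost equal the true cost, so $O'=O$ and hence $d(i,A)\le d(i,O)$.

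Next I use that $A$ is selected under the report $x_i$, together with Lemma \ref{lemma:lie-on-top} to assume w.l.o.g. that $x_i=A$. This gives $\sum_{T}d(j,A)\le\sum_{T}d(j,O)+d(A,O)$. Adding $d(i,A)\le d(i,O)$ yields $cost(A)-cost(O)\le d(A,O)$, which is the target with $c=1$.

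\textbf{Case $k\ge 2$.} Here I use finiteness of $\F$. Consider the facilities $B$ for which the profile ``every liar reports $B$'' is a Nash equilibrium electing $B$, and pick $A$ of minimum true cost among them; the set is nonempty by Theorem \ref{thm:neexist}. Writing $f_T(B)=\sum_{j\in T}d(j,B)$, the proof of Theorem \ref{thm:neexist} (via Lemma \ref{lemma:lie-on-top}) gives a checkable characterization of this set. Up to tie-breaking, it consists of the $B$ with
$$f_T(B)-f_T(C)\le (k-2)\,d(B,C)\quad\text{for all } C,$$
since the most damaging single deviation is to report $C$ itself.

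If $O$ is in the set, we are done. Otherwise I would walk from $O$ along a chain of violated constraints. Each step moves to a facility $C$ with $f_T(C)<f_T(B)-(k-2)d(B,C)$. This chain terminates by finiteness and strict decrease of $f_T$, ending at some $A'$ in the set. I then bound $cost(A')-cost(O)$ by telescoping the $f_T$ decreases against the liars' terms $\sum_{L}d(j,\cdot)\le k\,d(\cdot,\cdot)+\ldots$.

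\textbf{Main obstacle.} The hardest step is making this telescoping produce exactly the constant $\frac{k}{k-1}$ rather than something like $\frac{k}{k-2}$. I suspect the right move is not to restrict to all-liars-on-$B$ profiles. Instead I would let $k-1$ liars report $B$ and let one liar stay truthful or best-respond, which shifts the threshold from $k-2$ to $k-1$; the factor $\frac{k}{k-1}$ would then come from charging the $k$ liars' displacement to $k-1$ units of slack. I would try that construction first. As a fallback, I would define a potential $f_T(B)+\frac{1}{k-1}\sum_{L}d(j,B)$, take its minimizer over $\F$ (this is where finiteness is used), and verify directly both that this minimizer is sustained by an equilibrium and that it satisfies the target inequality.
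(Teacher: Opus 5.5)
Your reduction to $cost(A)-cost(O)\le c\,d(A,O)$ and your $k=1$ argument are correct and essentially the paper's. The gap is the case $k\ge 2$. This is the substance of the theorem, and you leave it unresolved.

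Your chain along violated constraints cannot reach $c=\frac{k}{k-1}$. Along $O=B_1,\dots,B_m=A'$ it gives $f_T(A')\le f_T(O)-(k-2)\delta$ with $\delta=\sum_s d(B_s,B_{s+1})\ge d(O,A')$. But the steps are not tied to any liar's preferences, so the only control on the liars is the direct bound $\sum_{j\in L}d(j,A')\le\sum_{j\in L}d(j,O)+k\,d(O,A')$. That yields $cost(A')-cost(O)\le k\,d(O,A')-(k-2)\delta\le 2\,d(O,A')$. This is enough for $k=2$ and for the uniform $\frac{n+2}{n-2}$ (once ties are handled), but not for $k\ge 3$.

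The fallback with your potential $\Phi(B)=f_T(B)+\frac{1}{k-1}\sum_{j\in L}d(j,B)$ fails outright. $\Phi(A)\le\Phi(O)$ only yields $cost(A)-cost(O)\le\frac{k(k-2)}{k-1}\,d(A,O)$. Concretely, take even $n\ge 6$, $k=\frac{n}{2}+1$ liars at $O$, the $n-k$ truthful clients at $A$, and $d(O,A)=1$. Then $O$ is optimal, but the $\Phi$-minimizer is $A$ (which is even an NE), with ratio $\frac{n+2}{n-2}$. This exceeds the claimed bound since $\frac{k}{k-1}<2$. Also, your ``characterization'' of the all-on-$B$ equilibria is only a sufficient condition: whether such a profile is an NE depends on where the liars actually are.

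The missing idea is to walk along \emph{profitable deviations} rather than violated constraints. From the profile with all liars on $A_p$, let a liar $i$ who can strictly gain best-respond, producing winner $A_{p+1}$. Then put all liars on $A_{p+1}$.

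\begin{itemize}
\item Lemma \ref{lemma:lie-on-top} makes $A_{p+1}$ a perceived minimizer when $i$ reports $A_{p+1}$ and the others still report $A_p$. Comparing with $A_p$ gives $f_T(A_p)\ge f_T(A_{p+1})+(k-2)\,d(A_p,A_{p+1})$.
\item This rules out cycles (via the tie-breaking order when $k=2$), so the process ends at an actual NE $A$.
\item At each step $i$ strictly gains and every other liar's cost rises by at most the step length. Hence $\sum_{j\in L}d(j,A)\le\sum_{j\in L}d(j,O)+(k-1)\delta$.
\end{itemize}

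Now split $\sum_{j\in L}d(j,A)=\frac{1}{k-1}\sum_{j\in L}d(j,A)+\frac{k-2}{k-1}\sum_{j\in L}d(j,A)$. Applying the path estimate to the second part contributes exactly $+(k-2)\delta$, which cancels the gain in $f_T$. Applying the direct estimate $\frac{1}{k-1}\bigl(\sum_{j\in L}d(j,O)+k\,d(O,A)\bigr)$ to the first part leaves $cost(A)\le cost(O)+\frac{k}{k-1}\,d(O,A)$, which is your target.
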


\begin{proof}
	Let $O$ be the true optimal solution. If $O$ is the winner of a Nash equilibrium, then we are already done, so assume it is not a Nash equilibrium. We define a process of finding a Nash equilibrium as follows. We start with $A_1 = O$. At step $p$, set $\vec{x}^p = (A_p, A_p, \cdots, A_p)$ to be the vector of reported locations of all clients in $L$. If for all $i \in L$ and any location $x'_i$ that $i$ can report, $d(GM^*(\vec{t};\vec{x^p}_{-i}, x_i'),i) \geq d(GM^*(\vec{t};\vec{x^p}),i)$, then we are done, since this implies $\vec{x^p}$ is a Nash equilibrium, and we stop the process. Otherwise, find a client $i$ such that $d(GM^*(\vec{t};\vec{x^p}_{-i}, x_i'),i) < d(GM^*(\vec{t};\vec{x^p}),i)$, let the set of all such $x_i'$ be $S$, $x''_i = \text{argmin}_{x_i' \in S}d(GM^*(\vec{t};\vec{x^p}_{-i}, x_i'),i)$. If there is a tie, we choose $x_i'$ that results in facility location $GM^*(\vec{t};\vec{x^p}_{-i}, x'_i)$ with the lowest numbering. Let $A_{p+1} = GM^*(\vec{t};\vec{x^p}_{-i}, x_i'')$. Then, set $\vec{x}^{p+1} = (A_{p+1}, A_{p+1}, \cdots, A_{p+1})$ and repeat the process.  
    
    Observe that for all $p$, we have that $A_{p} = \text{argmin}_{A \in \F}cost_A(\vec{t};\vec{x^p}).$ For $p>1,$ this is true due to Lemma \ref{lemma:lie-on-top}: since $A_p$ has the smallest cost using locations $(\vec{t};\vec{x}^{p-1}_{-i}, x_i'')$, then it will still have the smallest cost after moving the reported locations on top of $A_p$ in $(\vec{t};\vec{x}^{p})$. For $p=1$, recall that $A_1=O$ is the true optimal solution. This means $O = \text{argmin}_{B \in \F}\sum_{i\in \C}d(i,B).$
	With all the liars reporting at $O$, $\vec{x}^1 = (O, O, \cdots, O)$, it is obvious that $O = \text{argmin}_{B \in \F}\sum_{i\in \C}d(x_i^1,B) $ for the first step of the process. 
	
	% Suppose all clients in $L$ are at location $x$ such that $GM^*(\vec{t};\vec{x})=x$, and it is a Nash equilibrium, then we are done. However, if it is not a Nash equilibrium, all clients in $L$ deviate to some location $y$ such that $GM^*(\vec{t};\vec{x})=y$. 
	
	If there exists no cycle in the process, it must terminate and result in a Nash equilibrium since there is a finite amount of facility locations in $\F$. 
	We will show that there doesn't exist a $A_q$ where $A_q = A_p, p < q$, and thus there is no cycle. 
    %Note that at each step, there must exist some $i \in L$ such that $d(i, GM^*(\vec{t};\vec{y}_{p-1})) > d(i, GM^*(\vec{t};\vec{y_{p-1}}_{-i}, A_p))$ (or by construction, $d(i, A_{p-1}) > d(i, A_p), GM^*(\vec{t};\vec{y}_{p-1}) = A_{p-1}, GM^*(\vec{t};\vec{y_{p-1}}_{-i}, A_p) = A_p$) so $i$ would report to be at $A_p$ instead.
 %    % Note that each time there would be some $i$ deviating to $a_p$    
    By Lemma \ref{lemma:lie-on-top}, we have that since $A_{p+1} = GM^*(\vec{t};\vec{x^p}_{-i}, x_i'')$, then $A_{p+1} = \text{argmin}_{B \in \F}cost_B(\vec{t};\vec{x^p}_{-i}, A_{p+1})$.
	% so that $GM^*(\vec{t};\vec{x}_{-i},a_p) = a_p$. 
	Then, we have that
	
	\[
	\begin{aligned}
        cost_{A_p}(\vec{t};\vec{x}^p_{-i},A_{p+1}) &\geq cost_{A_{p+1}}(\vec{t};\vec{x}^p_{-i},A_{p+1})\\
		\sum_{j \in T}d(j,A_p) + \sum_{j \in L\setminus\{i\}}d(A_p,A_p) + d(A_{p+1},A_p) &\geq \sum_{j \in T}d(j,A_{p+1}) + \sum_{j \in L\setminus\{i\}}d(A_p,A_{p+1}) + d(A_{p+1},A_{p+1}) \\
        \end{aligned}
        \]
    
        \begin{equation}   
		\sum_{j \in T}d(j,A_p)  \geq \sum_{j \in T}d(j,A_{p+1})+ (\numl-2)d(A_p,A_{p+1}) 
         \label{equ:geq}
	     \end{equation}

	for all $p$.
%	Note that this argument is true for all $a_p, a_{p+1}$, so we have 
%	$$\sum_{j \in T}d(j,a_p)  > \sum_{j \in T}d(j,a_{p+1}) + (\numl-2)d(o,a_{p+1}) $$
%	and 
%	$$\sum_{j \in T}d(j,a_m)  > \sum_{j \in T}d(j,a) + (\numl-2)d(o,a) $$
	Now, suppose to the contrary that there exists some $A_q$ such that $A_q = A_p$ but $p < q$. We first consider the case when $k > 2$. Repeatedly applying Inequality \ref{equ:geq}, we obtain:
	\[
		\begin{aligned}
			\sum_{j \in T}d(j,A_p)  &\geq \sum_{j \in T}d(j,A_{p+1}) + (\numl-2)d(A_p,A_{p+1})\\
			&\geq \sum_{j \in T}d(j,A_{p+2}) + (\numl-2)d(A_{p+1},A_{p+2}) +(\numl-2)d(A_p,A_{p+1})\\
			&\geq \cdots\\
			&\geq \sum_{j \in T}d(j,A_{q}) + (\numl-2)(d(A_p,A_{p+1})
        d(A_{p+1},A_{p+2})+ \cdots + d(A_{q-1}, A_q)\\
			&> \sum_{j \in T}d(j,A_p),
		\end{aligned}
	\]
	which is a contradiction. 
    
    We then consider the case when $k = 1$. Similar to the proof for Theorem \ref{thm:neexist}, we choose $A_2$ to be the location which minimizes $d(GM^*(\vec{t};A_2),i)$, i.e., the location which, if reported by $i$, results in the smallest cost for $i$ under the mechanism $GM^*$. Then $A_2$ is trivially a Nash equilibrium, and there is no cycle. 
    
    Now, we consider the case $k = 2$. Applying Inequality \ref{equ:geq} for $k=2$, we would have $\sum_{j \in T}d(j,A_p)  \geq \sum_{j \in T}d(j,A_{p+1}) \geq \sum_{j \in T}d(j,A_{p+2}) \geq \cdots \geq \sum_{j \in T}d(j,A_{q}) =  \sum_{j \in T}d(j,A_{p})$. If any of the inequalities is strict, then we are done and have a contradiction. Therefore, we only consider the case $\sum_{j \in T}d(j,A_p) = \sum_{j \in T}d(j,A_{p+1}) = \sum_{j \in T}d(j,A_{p+2}) = \cdots = \sum_{j \in T}d(j,A_{q}) =  \sum_{j \in T}d(j,A_{p})$. This implies that $cost_{A_{p+l}}(\vec{t};\vec{x}^{p+l}_{-i},A_{p+l+1}) = cost_{A_{p+l+1}}(\vec{t};\vec{x}^{p+l}_{-i},A_{p+l+1})$ for all $0 \leq l \leq q-p-1$. By construction of how $A_{p+l+1}$ is chosen, we must have that $A_{p+l+1}$ is lower numbered than $A_{p+l}$, meaning that $A_q$ must be lower numbered than $A_p$, but $p = q$, a contradiction. Therefore, such a process must exist without a cycle.

    Combining the results above, we have shown that our process must terminate for all $k \geq 1$, and results in a Nash equilibrium. Let $A_t$ be the last facility chosen, so we know that $\vec{x} = (A_t, A_t, \cdots, A_t)$ is a Nash equilibrium. We want to show that the cost of this solution is close to the cost of the optimum solution.
    
    By Inequality \ref{equ:geq}, we can see that 
	
		\begin{align}
			\sum_{j \in T}d(j,A_1)  &\geq \sum_{j \in T}d(j,A_{2}) + (\numl-2)d(A_1,A_{2})\nonumber\\
			&\geq \sum_{j \in T}d(j,A_{3}) + (\numl-2)d(A_2,A_{3}) +(\numl-2)d(A_1,A_{2})\nonumber\\
			&\geq \cdots\nonumber\\
			&\geq \sum_{j \in T}d(j,A_t) + (\numl-2)(d(A_1,A_{2})+d(A_2,A_{3}) + \cdots + d(A_{t-2}, A_{t-1}) + d(A_{t-1},A_t))\label{eqn.truthbefore}%\\
            %&\geq \sum_{j \in T}d(j,A_t) + (\numl-2)d(A_1,A_t)\label{eqn.truth}
        \end{align}

	%Note that since $GM^*(\vec{t};\vec{x}_{-i},a_p) = a_p$, 
	The last inequality holds by triangle inequality. Note that Inequality \ref{eqn.truthbefore} holds for $k=1$ as well, since $A_t=A_2$ in that case. By the construction of the process that chose $A_p$, for each $p$ there must exist some $i$ such that $d(i, A_p) < d(i, A_{p-1})$. Then, by triangle inequality, we have that 
	\[ 
	\begin{aligned}
		\sum_{j \in L}d(j,A_p) &\leq d(i, A_{p-1}) + \sum_{j\in L \setminus\{i\}}\left(d(j,A_{p-1})+d(A_{p-1},A_p)\right)\\
		&= \sum_{j\in L }d(j,A_{p-1})+ (k-1)d(A_{p-1},A_p),
	\end{aligned}
	\]
	for all $p$. So we have that 
	\[
	\begin{aligned}
		\sum_{j \in L}d(j,A_t) &\leq  \sum_{j\in L }d(j,A_{t-1})+ (k-1)d(A_{t-1},A_t) \\
		&\leq \sum_{j\in L }d(j,A_{t-2})+ (k-1)(d(A_{t-2},A_{t-1})+d(A_{t-1},A_t)) \\
		&\leq \cdots\\
			&\leq \sum_{j \in L}d(j,A_1) + (\numl-1)(d(A_1,A_{2})+d(A_2,A_{3}) + \cdots + d(A_{t-2}, A_{t-1}) + d(A_{t-1},A_t))\\
	\end{aligned}
	\]
	
	For simplicity of notation, let $A = A_t$. Also recall that $A_1 = O$. Thus, we have that 
    % \begin{equation}
    % \label{equ:l}
    %     \sum_{j \in L}d(j,A_p) \leq  \sum_{j\in L }d(j,A_{p-1})+ (k-1)d(A_{p-1},A_p), \forall p
    % \end{equation}
    
    \begin{equation}
        \sum_{j \in L}d(j,A) \leq \sum_{j \in L}d(j,O) + (\numl-1)(d(O,A_{2})+d(A_2,A_{3}) + \cdots + d(A_{t-2}, A_{t-1}) + d(A_{t-1},A))
        \label{equ:t}
    \end{equation}
    
    Now, we can analyze the quality of Nash equilibrium $A$ w.r.t. the true optimal solution. 
    
    First, we consider the case $k = 1$. %, recall that all the clients and facilities are in some metric space, we can utilize triangle inequality and $\sum_{j \in T}d(j,O) \geq \sum_{j \in T}d(j,A) + (\numl-2)d(O,A) $. In addition, we note that since $k = 1$, by the construction of how we obtain the Nash equilibrium, for $i \in L$, it must be that $d(i, A) < d(i,O)$. 
	Recall that $A_t=A_2$ for $k=1$, and so 
    Inequality \ref{equ:geq} states that $\sum_{j \in T}d(j,A_1)  \geq \sum_{j \in T}d(j,A_{t})+ (\numl-2)d(A_1,A_{t})$, this means that $\sum_{j \in T}d(j,A) \leq \sum_{j \in T}d(j,O) - (\numl-2)d(O,A)$. Now, with $k=1$, $L=\{i\}$, and by the definition of our process, we must have $d(i,A) < d(i,O)$ as $i$ prefers $A$ over $O$. This means that $\sum_{j \in T}d(j,A) + d(i,A) < \sum_{j \in T}d(j,O) - (\numl-2)d(O,A) + d(i,O)$. Now, with triangle inequality, we can obtain the following result for the cost of $A$: 
    % when $k=1$, with $L=\{i\}$: 
	\[
	\begin{aligned}
		\sum_{j \in T}d(j,A) + d(i,A) &< \sum_{j \in T}d(j,O) - (\numl-2)d(O,A) + d(i,O)  \\
		&=   \sum_{j \in \C}d(j,O) - (1-2)d(O,A)  \\
		&=   \sum_{j \in \C}d(j,O) + d(O,A)  \\
		&\leq  \sum_{j \in \C}d(j,O) + \frac{1}{n} \sum_{j \in \C}d(j,O) + \frac{1}{n} \sum_{j \in \C}d(j,A)                             
	\end{aligned}
	\]
	Note that the last inequality holds since $d(O,A) \leq d(j,O) + d(j,A)$ for all $j \in \C$. Now, rearrange the above inequality, and we have 
	\[
	\begin{aligned}
		\frac{n-1}{n} \sum_{j \in \C}d(j,A) &< \frac{n+1}{n}\sum_{j \in \C}d(j,O)\\
		\sum_{j \in \C}d(j,A) &< \frac{n+1}{n-1} \sum_{j \in \C}d(j,O) 
	\end{aligned}
	\]
	as desired. 
    
    Now, we consider the case where $k > 1$. Let $\delta = (d(O,A_{2})+d(A_2,A_{3}) + $$ \cdots + d(A_{t-1}, A))$, recall that we have $\sum_{j \in T}d(j,O) \geq \sum_{j \in T}d(j,A) + (\numl-2)\delta$ by Inequality \ref{eqn.truthbefore}, this means that $ \sum_{j \in T}d(j,A) \leq  \sum_{j \in T}d(j,O) - (\numl-2)\delta$. Then, we can see that 

    \[
	\begin{aligned}
		\sum_{j \in T}d(j,A) + \sum_{j \in L}d(j,A) &\leq \sum_{j \in T}d(j,O) - (\numl-2)\delta + \sum_{j \in L}d(j,A)\\
        &\leq \sum_{j \in T}d(j,O) - (\numl-2)\delta + \frac{1}{k-1}\sum_{j \in L}d(j,A) + \frac{k-2}{k-1} \sum_{j \in L}d(j,A)      
	\end{aligned}
	\]
    Now, by Inequality \ref{equ:t}, we have $\sum_{j \in L}d(j,A) \leq \sum_{j \in L}d(j,O) + (\numl-1)\delta$. Therefore, with triangle inequality, we have
\[
	\begin{aligned}
		\sum_{j \in T}d(j,A) + \sum_{j \in L}d(j,A) 
		&\leq \sum_{j \in T}d(j,O) - (\numl-2)\delta + \frac{1}{k-1}\sum_{j \in L}d(j,A)\\
        &\quad+ \frac{k-2}{k-1} \left(\sum_{j \in L}d(j,O) + (\numl-1)\delta\right)  \\
        &= \sum_{j \in T}d(j,O) - (\numl-2)\delta + \frac{1}{k-1}\sum_{j \in L}d(j,A) + \frac{k-2}{k-1} \sum_{j \in L}d(j,O) + (\numl-2)\delta\\
        &= \sum_{j \in T}d(j,O) + \frac{1}{k-1}\sum_{j \in L}d(j,A) + \frac{k-2}{k-1} \sum_{j \in L}d(j,O)            
	\end{aligned}
	\]
    Here we note that by triangle inequality, $d(j,A) \leq d(j,O) + d(O,A)$ for all $j \in L$. We can then obtain
    \[
\begin{aligned}
		\sum_{j \in T}d(j,A) + \sum_{j \in L}d(j,A) 
		&\leq \sum_{j \in T}d(j,O) + \frac{1}{k-1}\sum_{j \in L}d(j,O) + \frac{1}{k-1}\sum_{j \in L}d(O,A)+ \frac{k-2}{k-1} \sum_{j \in L}d(j,O)   \\
        &= \sum_{j \in T}d(j,O) + \sum_{j \in L}d(j,O) + \frac{1}{k-1}\sum_{j \in L}d(O,A)  \\
		&=   \sum_{j \in \C}d(j,O) + \frac{k}{k-1} \cdot d(O,A) \\
		&\leq  \sum_{j \in \C}d(j,O) + \frac{k}{n(k-1)} \sum_{j \in \C}d(j,O) + \frac{k}{n(k-1)} \sum_{j \in \C}d(j,A)             
	\end{aligned}
	\]
    
	The last inequality above holds since $d(O,A) \leq d(j,O) + d(j,A)$ for all $j \in \C$. Now, rearrange the above inequality, and we have 
	\[
	\begin{aligned}
		\frac{n-k/(k-1)}{n} \sum_{j \in \C}d(j,A) &\leq \frac{n+k/(k-1)}{n}\sum_{j \in \C}d(j,O)\\
		\sum_{j \in \C}d(j,A) &\leq \frac{n+k/(k-1)}{n-k/(k-1)} \sum_{j \in \C}d(j,O) 
	\end{aligned}
	\]
	
	Therefore, we can conclude that the price of stability is at most $\frac{n+1}{n-1}$ when $k = 1$ and is at most $\frac{n+k/(k-1)}{n -k/(k-1)}$ when $k > 1$.
\end{proof}

%We note that when $n \rightarrow \infty$, the upper bound for the price of stability is very close to 1.
The above bound is also almost tight for $k \leq n/2$ when $k > 2$:
 \begin{theorem}
 	Using Mechanism $GM^*$, for any $k$ with  $2< k \leq \frac{n}{2}$, there exists an instance where the best Nash equilibrium is a $\frac{n+1}{n-1} - \epsilon$ approximation (for any $\epsilon > 0$) of the true optimal solution, even on a line metric.
 	\label{thm:postight}
 \end{theorem}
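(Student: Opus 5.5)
The plan is to work on the line with an optimum $O$ at $0$ and a slightly worse facility $A$ at $1$. The analysis of Theorem~\ref{thm:gen_pos} is tight exactly when $d(O,A)\approx \frac1n\left(\sum_j d(j,O)+\sum_j d(j,A)\right)$, which happens when almost all clients sit near the midpoint $1/2$. It also needs the true costs to satisfy $\sum_j d(j,A)\approx \sum_j d(j,O)+d(O,A)$. So I would place $n-1$ clients near $1/2$ and one truthful client exactly at $O=0$. This gives
\[
\sum_j d(j,O)\approx \tfrac{n-1}{2},\qquad \sum_j d(j,A)\approx \tfrac{n-1}{2}+1,
\]
so the ratio is $\frac{n+1}{n-1}$. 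The $k$ liars are among the midpoint clients, but shifted right by a tiny $\epsilon'$ so that each strictly prefers $A$ to $O$. The truthful midpoint clients are shifted left by a comparable amount, so that $O$ remains the strict optimum. Letting $\epsilon'\to 0$ then produces the $-\epsilon$ in the statement.

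The steps are as follows. (i) Verify that all liars reporting at $A$ is a Nash equilibrium selecting $A$. With $k>2$ liars, the perceived gap $cost_O-cost_A$ is about $k-1>1$. No single liar can move it back below $0$, since one report changes it by at most $2d(O,A)=2$. No other facility can be made to win either. (ii) Compute the true cost of $A$ and check that the ratio tends to $\frac{n+1}{n-1}$. (iii) Show that no equilibrium selects $O$, or any other facility of cost below that of $A$. Step (iii) is where extra gadgets are needed, and it is the crux.

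The main obstacle is step (iii). In the bare two-facility instance, the profile in which every liar reports far to the left of $0$ makes $O$ win. A single liar switching to $A$ changes the gap by only $2$, which does not suffice once $k\ge 3$. So that profile is a Nash equilibrium selecting $O$, and the best NE would be optimal. To destroy such equilibria, I would add a third facility $B$ just left of $O$, lower-numbered, with $d(O,B)$ tuned to roughly $1/(k-1)$. Then any profile in which enough liars report left of $0$ to beat $A$ instead makes $B$ win. Since liars are far from $B$, they dislike it, and some liar can profitably deviate to a report at $A$ or at $O$; the deviation to $A$ is the one that should succeed, because reporting left is then no longer useful. The bookkeeping is as follows. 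For every profile whose winner is $O$ or $B$, exhibit a single liar whose deviation to report $A$ makes $A$ win. For this I would use Lemma~\ref{lemma:lie-on-top} to restrict deviations to reports placed on facilities. I would then classify profiles by how many liars report on the left of $O$, at $O$, between $O$ and $A$, or at or beyond $A$. The remaining work is to choose the positions of $B$ and the small perturbations so that these inequalities hold simultaneously while $B$ never becomes a cheap true solution. This also requires $k\le n/2$, so that enough truthful mass stays at the midpoint. I expect this calibration to take essentially all of the effort.
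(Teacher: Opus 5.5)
Your approach has a genuine gap: the instance has a Nash equilibrium at the optimum, so its price of stability is $1$, and no calibration of the gadget $B$ can change that. In your instance the truthful clients (one at $0$, the rest at $\frac12-\delta$) strictly prefer $O$ to $A$. Since they all lie at or to the right of $O$, they also prefer $O$ to any $B$ placed left of $O$, so $GM^*(\vec{t})=O$. The proof of Theorem~\ref{thm:neexist} shows that in \emph{every} instance with $k\ge 2$, all liars reporting at $GM^*(\vec{t})$ is a Nash equilibrium electing $GM^*(\vec{t})$. To see this, suppose one liar deviates so that some $A'$ wins. By Lemma~\ref{lemma:lie-on-top} you may assume he reports $A'$ while the others stay at $O$. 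Then $cost_{O}\le\sum_{j\in T}d(j,A')+d(A',O)<\sum_{j\in T}d(j,A')+(k-1)d(A',O)=cost_{A'}$ for $k>2$, a contradiction.

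Concretely, the profile you must destroy is not ``liars far left of $0$'' but ``liars exactly at $O$''. It gives $cost_A-cost_O\approx k+1$, which one liar can shift by at most $2$, and it gives $cost_B-cost_O=n\,d(O,B)>0$. The gadget also fails on its own terms. Because the $n-k\ge k$ truthful clients all sit at or right of $O$, every profile has $cost_B-cost_O\ge (n-2k)\,d(O,B)\ge 0$, so $B$ never strictly beats $O$ and never wins when $k<n/2$.

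The missing idea is that the roles must be reversed: the truthful clients must favor the bad facility, and the optimum must come from the liars' true locations. The paper uses only two facilities, $O$ and a lower-numbered $O^*$ with $d(O,O^*)=2$. It places $k-1$ liars on $O$ and one liar essentially at the midpoint, slightly closer to $O^*$. The $n-k$ truthful clients sit at distance $\frac{n-2k+2}{n-k}$ from $O^*$, so in aggregate they prefer $O^*$ by $2k-4$; placing them there requires $k>2$. Since each liar's report contributes at most $2$, every profile electing $O$ has $cost_{O^*}-cost_O\le 2k-(2k-4)=4=2\,d(O,O^*)$. The near-indifferent liar, by reporting $O^*$, therefore always forces $cost_{O^*}\le cost_O$, and the tie-break goes to $O^*$. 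Thus every equilibrium elects $O^*$, and its true cost $n+1-\epsilon$ against $n-1+\epsilon$ for $O$ gives the bound.
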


\begin{proof}
 	% \yue{draw the graph}
\begin{figure}[!ht]
\centering
\begin{tikzpicture}[node distance=2cm, every node/.style={scale=1.1}]
    % Nodes
    \node[draw, rectangle, fill=cyan!30] (O) {$O$};
    \node[draw, circle, fill=cyan!30, right= of O,xshift=1.5cm] (A) {$A$};
    \node[draw, circle, fill=cyan!30, right= of A, xshift=-0.5cm] (B) {$B$};
    \node[draw, rectangle, fill=green!30, right= of B, xshift=0cm] (Ostar) {$O^*$};

    % Connections
    \draw (O) -- (A) node[midway, above] {1 + $\epsilon$};
    \draw (A) -- (B) node[midway, above] {\(\frac{k-2}{n-k}-\epsilon\)};
    \draw (B) -- (Ostar) node[midway, above] {\(\frac{n-2k+2}{n-k}\)};

    % Client and facility labels
    \node[above=0.3cm of O] {\(k-1\) clients, 1 facility};
    \node[below=0.3cm of A] {1 client};
    \node[below=0.3cm of B] {\(n-k\) clients};
    \node[above=0.3cm of Ostar] {1 facility};
\end{tikzpicture}
\caption{An instance of facility location problem where the price of stability bound is almost tight on a line. There are $k-1$ clients in $L$ at location $O$, the remaining 1 client in $L$ is at location $A$. All the $n-k$ clients in $T$ are at location B. There are also two possible facility locations $O$ and $O^*$ and $O^*$ is lower numbered than $O$, with the distances between two adjacent locations labeled.}
\label{fig:pos}
\end{figure}
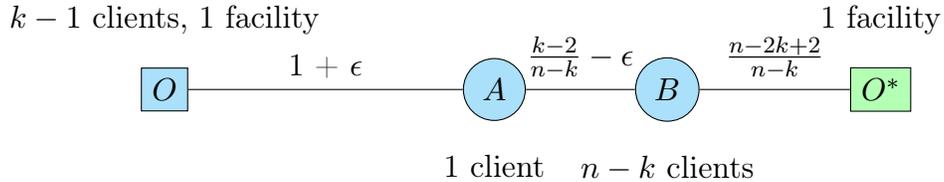

Consider the instance in the figure above. There are $k-1$ clients in $L$ at location $O$, the remaining 1 client in $L$ is at location $A$. All the $n-k$ clients in $T$ are at location $B$. There are also two possible facility locations $O$ and $O^*$, and $O^*$ is lower numbered than $O$, meaning it would be selected in case of a tie. The distances are as labeled, with $\epsilon > 0$ and sufficiently small. These are the true locations of the clients. We can see that choosing $O$ gives us a total cost of $n-1+\epsilon$ while $O^*$ gives us a total of $n+1-\epsilon$. This means that $O$ is the optimal solution. However, if the client at location $A$ reports to be at location $O^*$, the total perceived cost of choosing $O$ is $n$ but the perceived cost for choosing $O^*$ is also $n$, meaning $O^*$ would be chosen since it is lower numbered (note that the client who deviates prefers $O^*$ over $O$), therefore choosing $O$ is not an equilibrium under $GM^*$. 
Thus, under $GM^*$, all existing NE must choose $O^*$ (e.g., if all clients in $L$ report being at $O^*$, this is a NE). 
%Thus, under $GM^*$, only choosing $O^*$ is a Nash equilibrium, and not $O$. 
Therefore, the price of stability in this instance is $(n+1-\epsilon)/(n-1+\epsilon)$. With $\epsilon \rightarrow 0$, this value approaches $\frac{n+1}{n-1}$ as desired. 
 \end{proof}
 
 Next, we will evaluate the quality of the worst Nash equilibrium w.r.t. the true optimal solution. We present a (almost) tight bound on the quality of all Nash equilibria below. Note that for general metric spaces, this gives a bound of a small constant even for $k$ being about 40 percent of $n$, while the best strategyproof mechanisms cannot form a constant approximation for general metric spaces. 
 %Even for line metrics, for which simple strategyproof mechanisms provide an approximation ratio of 3, the Nash equilibria of $GM^*$ can be significantly better when $k$ is small. For example, when $k$ is about 15 percent of $n$, our bound gives a price of anarchy of only 1.86. 
 Thus, if we care about the cost of the resulting solution, and believe that the number of strategic agents is not too large, it makes sense to use our mechanism. This, of course, comes with the price of no longer having strategyproofness.

\begin{theorem}
 	Using Mechanism $GM^*$, the price of anarchy is at most $\frac{n+2k}{n-2k}$ for $ k < \frac{n}{2}$.
 	\label{thm:poa_gen}
 \end{theorem}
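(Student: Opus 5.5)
The plan is to compare the winner $A$ of an arbitrary Nash equilibrium $(\vec{t};\vec{x})$ with the true optimum $O$. I will use only two facts: $A$ minimizes the perceived cost, and the triangle inequality. The estimates will in fact hold for \emph{any} reported profile $\vec{x}$, so I do not expect the equilibrium conditions to be needed. If this turns out to be true, the bound is really a statement about $GM^*$ under arbitrary reports by at most $k$ clients.

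\textbf{Step 1 (truthful clients).} Since $GM^*(\vec{t};\vec{x})=A$, we have $cost_A(\vec{t};\vec{x})\le cost_O(\vec{t};\vec{x})$, that is,
\[
\sum_{j\in T}d(j,A)-\sum_{j\in T}d(j,O)\le \sum_{j\in L}\bigl(d(x_j,O)-d(x_j,A)\bigr).
\]
By the triangle inequality each term on the right is at most $d(A,O)$. Hence $\sum_{j\in T}d(j,A)\le\sum_{j\in T}d(j,O)+k\,d(A,O)$. The point is that however the liars report, their reports can shift the perceived comparison between $A$ and $O$ by at most $k\,d(A,O)$.

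\textbf{Step 2 (strategic clients).} For the true locations of the clients in $L$ the triangle inequality directly gives $\sum_{j\in L}d(j,A)\le \sum_{j\in L}d(j,O)+k\,d(A,O)$.

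\textbf{Step 3 (combine).} Adding Steps 1 and 2 gives
\[
\sum_{j\in\C}d(j,A)\le \sum_{j\in\C}d(j,O)+2k\,d(A,O).
\]
As in the proof of Theorem~\ref{thm:gen_pos}, I will bound $d(A,O)\le d(j,A)+d(j,O)$ for every $j\in\C$ and average over all $n$ clients. This yields $d(A,O)\le \frac1n\bigl(\sum_{j\in\C} d(j,A)+\sum_{j\in\C} d(j,O)\bigr)$. Substituting and rearranging gives
\[
\frac{n-2k}{n}\sum_{j\in\C}d(j,A)\le\frac{n+2k}{n}\sum_{j\in\C}d(j,O).
\]
Since $k<n/2$, the coefficient $\frac{n-2k}{n}$ is positive, so dividing through gives the claimed ratio $\frac{n+2k}{n-2k}$.

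The only delicate step is Step 1, namely recognizing that the liars' influence is capped at $d(A,O)$ each. The rest is routine. Because Step 1 uses no equilibrium property, I will double-check that nothing in the definitions of POA or $GM^*$ (in particular the tie-breaking rule) breaks the inequality $cost_A\le cost_O$. It does not, since ties are only broken among minimizers of the perceived cost.
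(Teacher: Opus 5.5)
Your proof is correct and follows essentially the same route as the paper. The paper likewise derives $\sum_{i\in T}d(i,W)\le\sum_{i\in T}d(i,O)+k\,d(W,O)$ from $cost_W(\vec{t};\vec{x})\le cost_O(\vec{t};\vec{x})$ via the triangle inequality, adds the direct bound for the strategic clients' true locations, and finishes with the averaging bound $d(W,O)\le\frac{1}{n}\bigl(\sum_{i\in\C}d(i,W)+\sum_{i\in\C}d(i,O)\bigr)$. Your observation that no equilibrium condition is needed is accurate: the paper's argument also uses only that the winner minimizes perceived cost.
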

 \begin{proof}
Let $O$ be the true optimal solution, $W$ be the winner of a Nash equilibrium, and $x_i$ be the reported location of client $i \in L$ in this Nash equilibrium, so $W = \text{argmin}_{B \in \F}(\sum_{i\in T}d(i,B)+\sum_{i\in L}d(x_i,B))$. As usual, we denote the true location of client $i$ by $i$. Note that if $O = W$, we are done. So we assume that $O \neq W$.

 	%We consider the case where $k < \frac{n}{2}$. 
    Since $W$ is chosen by $GM^*$ given reported locations $x_i$, %$W = \text{argmin}_{B \in \F}\sum_{i\in \C}d(x_i,B)$, 
    we have that  \[
 	\begin{aligned}
    cost_W(\vec{t};\vec{x})&\leq cost_O(\vec{t};\vec{x})\\
 \sum_{i \in T} d(i, W)+\sum_{i \in L} d\left(x_i, W\right) &\leq \sum_{i \in T} d(i, O)+\sum_{i\in L} d\left(x_i, O\right) \\
\sum_{i \in T} d(i, W) &\leq  \sum_{i\in T} d(i, O)+\sum_{i \in L} d\left(x_i, O\right)-\sum_{i \in L} d\left(x_i, W\right) \\
&\leq  \sum_{i\in T} d(i, O)+\sum_{i \in L} d\left(x_i, W\right)+ \sum_{i \in L} d\left(O, W\right) -\sum_{i \in L} d\left(x_i, W\right) \\
& \leq  \sum_{i \in T} d(i, O)+\sum_{i \in L} d(W, O)
\end{aligned}
 	\]
 	The last two inequalities hold due to triangle inequality. Then, we can evaluate the Nash equilibrium with $W$ as its winner w.r.t. the true optimal solution and the true (rather than reported) locations.

    \[
    \begin{aligned}
\sum_{i \in T} d(i, W)+\sum_{i \in L} d(i, W) & \leq \sum_{i \in T} d(i, O)+\sum_{i \in L} d(W, O)+\sum_{i \in L} d(i, W) \\
& \leq \sum_{i \in T} d(i, O)+\sum_{i \in L} d(W, O)+\sum_{i \in L} d(i, O)+\sum_{i \in L} d(W, O) \\
& =\sum_{i \in \mathcal{C}} d(i, O)+2 k \cdot d(W, O) \\
& \leq \sum_{i \in \mathcal{C}} d(i, O)+\frac{2 k}{n} \sum_{i \in \mathcal{C}} d(i, W)+\frac{2 k}{n} \sum_{i \in \mathcal{C}} d(i, O) \\
& =\frac{n+2 k}{n} \sum_{i \in \mathcal{C}} d(i, O)+\frac{2 k}{n} \sum_{i \in \mathcal{C}} d(i, W)
\end{aligned}
    \]
    Now, rearrange the above inequality, and we have
    \[
    \begin{aligned}
\frac{n-2 k}{n} \sum_{i \in \mathcal{C}} d(i, W) & \leq \frac{n+2 k}{n} \sum_{i \in \mathcal{C}} d(i, O) \\
\sum_{i \in \mathcal{C}} d(i, W) & \leq \frac{n+2 k}{n-2 k} \sum_{i \in \mathcal{C}} d(i, O)
\end{aligned}
    \]
     as desired. 
 \end{proof}

Here, we note that the upper bound for POA can be further improved if all clients and facilities are located on a line.
 
 \begin{theorem}
 	Using Mechanism $GM^*$, if all clients and facilities are located on a line, the price of anarchy is at most $\frac{n+1}{n-1}$ when $k = 1$ and at most $\frac{n+2k-2}{n-2k+2}$ for $ 1 < k < \frac{n}{2}$.
 	\label{thm:poa}
 \end{theorem}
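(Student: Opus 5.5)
We work with the proof of Theorem \ref{thm:poa_gen}, but use the one-dimensional structure to save a $2d(W,O)$ term. Let $O$ be the optimum and $W\neq O$ the winner of a NE with reports $\vec{x}$. By symmetry we assume $O<W$ on the line. The general proof charges each liar $i$ two quantities:
\begin{itemize}
\item the perceived slack $d(x_i,O)-d(x_i,W)\le d(O,W)$;
\item the true-cost overhead $d(i,W)-d(i,O)\le d(O,W)$.
\end{itemize}
Together these give the $2k\,d(W,O)$ term. The goal is to show that, on a line, the combined charge is at most $2(k-1)\,d(W,O)$ (plus a lower-order tie term when $k=1$). The final averaging step is then unchanged: we use $d(O,W)\le d(j,O)+d(j,W)$ summed over all $j\in\C$ and rearrange.

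\textbf{Step 1 (liars at or right of $W$).} If $i\ge W$, then $d(i,W)=d(i,O)-d(O,W)$. So the true-cost overhead is $-d(O,W)$, and the combined charge of $i$ is at most $0$ instead of $2d(O,W)$. If every liar lies at or right of $W$, the bound is already much better than claimed.

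\textbf{Step 2 (a liar left of $W$).} Take a liar $i$ with $i<W$; this liar strictly prefers points somewhat to the left of $W$. I would show, using the NE condition, that $W$ must still win if $i$ reports truthfully, i.e.\ $GM^*(\vec{t};\vec{x}_{-i},i)=W$.
\begin{itemize}
\item On a line, each perceived cost $B\mapsto cost_B$ is a sum of convex piecewise-linear functions. Moving one report from $x_i$ to the point $i$ therefore moves the set of minimizers monotonically, and the new winner $W'$ lies between $\min(i,W)$ and $\max(x_i,W)$ in a controlled way.
\item When $x_i\ge i$ (the relevant case for pulling $W$ right), $W'\le W$ and $W'$ is at least as close to $i$ as $W$.
\item So if $W'\neq W$, then $i$ weakly prefers $W'$. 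This is strict unless $W'$ is the reflection of $W$ around $i$, which can only happen in a tie.
\end{itemize}
Granting this, we apply the minimality of $W$ to the profile $(\vec{t};\vec{x}_{-i},i)$ rather than to $(\vec{t};\vec{x})$. Liar $i$ then contributes $d(i,O)-d(i,W)$ to the perceived slack instead of $d(O,W)$. Adding its true cost $d(i,W)$ gives exactly $d(i,O)$ with no extra $d(O,W)$. The remaining $k-1$ liars are charged as in Theorem \ref{thm:poa_gen}. This yields
\[
\textstyle\sum_{j\in\C}d(j,W)\le\sum_{j\in\C}d(j,O)+2(k-1)\,d(W,O),
\]
which rearranges to $\frac{n+2k-2}{n-2k+2}$.

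\textbf{Main obstacle.} I expect the hard step to be the monotonicity/preference argument in Step 2, together with tie-breaking by facility order. A deviation could produce a facility tied in perceived cost, or a facility equidistant from $i$ on the other side. I would handle this by choosing the deviation carefully: either the truthful report, or a report placed at the relevant facility using Lemma \ref{lemma:lie-on-top}, or just past it. The aim is an inequality that may lose only a non-strict term.

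\textbf{Case $k=1$.} Here the saving would formally give ratio $1$, so the tie issue is real. For this case I would fall back to a weaker accounting: the single liar's two charges together cost at most one $d(O,W)$ rather than two. Following the $k=1$ computation in the proof of Theorem \ref{thm:gen_pos}, this gives $\frac{n+1}{n-1}$.
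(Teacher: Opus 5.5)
\textbf{There is a genuine gap in Step 2.} You claim that $GM^*(\vec{t};\vec{x}_{-i},i)=W$ for every liar $i<W$, and this is false. So is the supporting claim that the new winner lies between $\min(i,W)$ and $\max(x_i,W)$ and is at least as close to $i$ as $W$. The continuous minimizer does move monotonically, but $\F$ is discrete. The facility winner can therefore jump past $i$ to a facility farther from $i$. The NE condition is then not violated, because the liar dislikes the new winner, yet $W$ does not survive.

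Here is a counterexample. Take facilities $0$ and $10$, with $0$ lower-numbered, truthful clients at $0,2,5$, and two liars with true location $6$ who both report $10$ (so $n=5$, $k=2$).
\begin{itemize}
\item Perceived costs are $27$ at $0$ and $23$ at $10$, so $W=10$.
\item Both liars already get their favorite facility, so this is an NE.
\item True costs are $19$ at $0$ and $31$ at $10$, so $O=0<W$.
\item If either liar reports truthfully at $6$, perceived costs become $23$ and $27$, and $0$ wins. It is at distance $6>4$ from that liar and lies outside $[6,10]$.
\end{itemize}
No liar is at or right of $W$, so neither Step 1 nor Step 2 covers this instance. Your $k=1$ paragraph has the same problem. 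Ratio $1$ there is not a tie artifact, and ``the single liar's two charges cost at most one $d(O,W)$'' holds only if that liar weakly prefers $W$.

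\textbf{The missing idea is to split by preference rather than by position,} which is what the paper does.

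If every liar weakly prefers $W$ to $O$, then $\sum_{i\in L}d(i,W)\le\sum_{i\in L}d(i,O)$. Combine this with $\sum_{i\in T}d(i,W)\le\sum_{i\in T}d(i,O)+k\,d(W,O)$ from the proof of Theorem \ref{thm:poa_gen}. This gives $\frac{n+k}{n-k}$, which is $\frac{n+1}{n-1}$ for $k=1$ and at most $\frac{n+2k-2}{n-2k+2}$ for $k\ge 2$.

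If some liar $j$ strictly prefers $O$, let $j$ report exactly $O$ rather than its true location. By Lemma \ref{lemma:lineshift}, the new winner is either $W$ or a facility $j$ strictly prefers to $W$, so the NE forces $W$ to remain. Then $cost_W(\vec{t};\vec{x}_{-j},O)\le cost_O(\vec{t};\vec{x}_{-j},O)$ gives $\sum_{i\in T}d(i,W)\le\sum_{i\in T}d(i,O)+(k-2)\,d(W,O)$. Adding $\sum_{i\in L}d(i,W)\le\sum_{i\in L}d(i,O)+k\,d(W,O)$ yields your $2(k-1)\,d(W,O)$, and ratio $1$ when $k=1$.

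Keeping the truthful deviation for such a $j$ is possible, but it needs more than your monotonicity sketch provides. The deviation to $O$ together with Lemma \ref{lemma:lineshift} is the direct route; that lemma also handles both orientations, which matters because the tie-breaking rule makes ``by symmetry'' delicate.
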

 \begin{proof}
 	Let $O$ be the true optimal solution, $W$ be the winner of a Nash equilibrium, and $x_i$ be the reported location of client $i \in L$ in this Nash equilibrium, so $W = \text{argmin}_{B \in \F}(\sum_{i\in T}d(i,B)+\sum_{i\in L}d(x_i,B))$. As usual, we denote the true location of client $i$ by $i$. Note that if $O = W$, we are done. So we assume that $O \neq W$.

%  	%We consider the case where $k < \frac{n}{2}$. 
%     Since $W$ is chosen by $GM^*$ given reported locations $x_i$, %$W = \text{argmin}_{B \in \F}\sum_{i\in \C}d(x_i,B)$, 
%     we have that
%  	\[
%  	\begin{aligned}
%     cost_W(\vec{t};\vec{x})&\leq cost_O(\vec{t};\vec{x})\\
%  \sum_{i \in T} d(i, W)+\sum_{i \in L} d\left(x_i, W\right) &\leq \sum_{i \in T} d(i, O)+\sum_{i\in L} d\left(x_i, O\right) \\
% \sum_{i \in T} d(i, W) &\leq  \sum_{i\in T} d(i, O)+\sum_{i \in L} d\left(x_i, O\right)-\sum_{i \in L} d\left(x_i, W\right) \\
% &\leq  \sum_{i\in T} d(i, O)+\sum_{i \in L} d\left(x_i, W\right)+ \sum_{i \in L} d\left(O, W\right) -\sum_{i \in L} d\left(x_i, W\right) \\
% & \leq  \sum_{i \in T} d(i, O)+\sum_{i \in L} d(W, O)
% \end{aligned}
%  	\]
%  	The last two inequalities hold due to triangle inequality.
By Theorem \ref{thm:poa_gen}, we have that $\sum_{i \in T} d(i, W) \leq \sum_{i \in T} d(i, O)+\sum_{i \in L} d(W, O)$. Then, we can evaluate the Nash equilibrium with $W$ as its winner w.r.t. the true optimal solution and the true (rather than reported) locations. We will consider two cases: (i) for all $j \in L$, $d(j, O) \geq d(j,W)$, (ii) there exists some $j \in L$ such that $d(j, O) < d(j,W)$. We first consider the case where for all $j \in L$, $d(j, O) \geq d(j,W)$.
 	\[
 	\begin{aligned}
 		\sum_{i \in \C}d(i, W) = \sum_{i \in T}d(i,W) + \sum_{i \in L}d(i,W) &\leq \sum_{i \in T} d(i, O)+\sum_{i \in L} d(W, O) + \sum_{i \in L}d(i,W)\\
 		&\leq \sum_{i \in T} d(i, O)+\sum_{i \in L} d(W, O) + \sum_{i \in L}d(i,O)\\
 		&= \sum_{i \in \C} d(i, O) + k \cdot d(W,O)\\
 		&\leq \sum_{i \in \C} d(i, O) + \frac{k}{n} \sum_{i \in \C}d(i,W) + \frac{k}{n} \sum_{i \in \C}d(i,O)\\
 		&= \frac{n+k}{n}\sum_{i \in \C} d(i, O) + \frac{k}{n} \sum_{i \in \C}d(i,W)
 	\end{aligned}
 	\]
 	The last inequality holds since $d(W,O) \leq d(i,W) + d(i,O)$ for all $i \in \C$ (triangle inequality). Now, rearrange the above inequality, and we have that $\sum_{i \in \C}d(i,W) \leq \frac{n+k}{n-k} \sum_{i \in \C}d(i,O)$ for this case.
	% \[
	% \begin{aligned}
	% 	\frac{n-k}{n} \sum_{i \in \C}d(i,W) &\leq \frac{n+k}{n}\sum_{i \in \C}d(i,O)\\
	% 	\sum_{i \in \C}d(i,W) &\leq \frac{n+k}{n-k} \sum_{i \in \C}d(i,O) 
	% \end{aligned}
	% \]

We then consider the case where there exists some $j \in L$ such that $d(j, O) < d(j,W)$. Let $x_j'=O$, and let $A = GM^*(\vec{t};\vec{x}_{-j},x_j')$. Then by Lemma \ref{lemma:lineshift}, we must have either $A = W$ or $d(j, A) < d(j, W)$. In other words, by having $j$ switch their reported location to $O$, we either still end up with $W$ as the winner, or end up with something that $j$ prefers to $W$. However, since $W$ is the winner at NE, it cannot be that $d(j, A) < d(j, W)$, otherwise $j$ could deviate to $O$ and improve their cost, contradicting that we are in a NE. Thus, it must be that $W = A = GM^*(\vec{t};\vec{x}_{-j},O)$, meaning that $W$ is selected over $O$ by our mechanism given reported locations $(\vec{t};\vec{x}_{-j},O)$. Thus,

	\[
	\begin{aligned}
    cost_W(\vec{t};\vec{x}_{-j},O) &\leq cost_O(\vec{t};\vec{x}_{-j},O)\\
		\sum_{i \in T}d(i,W) + d(O,W) + \sum_{i \in L\setminus \{j\}}d(x_i,W) &\leq \sum_{i \in T}d(i,O) + d(O,O) + \sum_{i \in L\setminus \{j\}}d(x_i,O)\\
		\sum_{i \in T}d(i,W)   &\leq \sum_{i \in T}d(i,O)  + \sum_{i \in L\setminus \{j\}}d(x_i,O)\\
        &\quad - \sum_{i \in L\setminus \{j\}}d(x_i,W) - d(O,W)
	\end{aligned}
	\]
	Now, by triangle inequality, we have that
	\[
	\begin{aligned}
		\sum_{i \in T}d(i,W) &\leq \sum_{i \in T}d(i,O)  + \sum_{i \in L\setminus \{j\}}d(x_i,W) +  \sum_{i \in L\setminus \{j\}}d(W,O) - \sum_{i \in L\setminus \{j\}}d(x_i,W) - d(O,W)\\
		&= \sum_{i \in T}d(i,O)  + \sum_{i \in L\setminus \{j\}}d(W,O)  -  d(O,W)\\
		   %&= \sum_{i \in T}d(i,O)  + (k-1)\cdot d(W,O)  -  d(O,W)\\
		   &= \sum_{i \in T}d(i,O)  + (k-2)\cdot d(W,O) 
	\end{aligned}
	\]
	Then, we can see that
	\[
	\begin{aligned}
		\sum_{i \in T} d(i,W) + \sum_{i \in L} d(i,W) &\leq \sum_{i \in T}d(i,O)  + (k-2)\cdot d(W,O) + \sum_{i \in L} d(i,W) \\
	&\leq \sum_{i \in T}d(i,O)  + (k-2)\cdot d(W,O) + \sum_{i \in L} d(i,O) + \sum_{i \in L} d(W,O)\\
		&= \sum_{i \in \C}d(i,O)  + (k-2)\cdot d(W,O) + k\cdot d(W,O)\\
		&= \sum_{i \in \C}d(i,O)  + (2k-2)\cdot d(W,O)\\
		&\leq \sum_{i \in \C} d(i, O) + \frac{2k-2}{n} \sum_{i \in \C}d(i,W) + \frac{2k-2}{n} \sum_{i \in \C}d(i,O)\\
 		&= \frac{n+(2k-2)}{n}\sum_{i \in \C} d(i, O) + \frac{2k-2}{n} \sum_{i \in \C}d(i,W) 
	\end{aligned}
	\]
	The last inequality holds since $d(W,O) \leq d(i, W) + d(i,O)$ for all $i \in \C$. Now, rearranging the above inequality, we have 
	\[
	\begin{aligned}
		\frac{n-(2k-2)}{n} \sum_{i \in \C}d(i,W) &\leq \frac{n+(2k - 2)}{n}\sum_{i \in \C}d(i,O)\\
		%\sum_{i \in \C}d(i,W) &\leq \frac{n+(2k-2)}{n-(2k-2)} \sum_{i \in \C}d(i,O) \\
		\sum_{i \in \C}d(i,W) &\leq \frac{n+2k-2}{n-2k+2} \sum_{i \in \C}d(i,O)
	\end{aligned}
	\]
Combining the results for the two cases, we note that $\frac{n+2k-2}{n-2k+2} > \frac{n+k}{n-k}$ for any $n > 0, k < \frac{n}{2}$, therefore the price of anarchy is at most $\frac{n+2k-2}{n-2k+2}$ for $k < \frac{n}{2}$ as desired. 
Combining the results for the two cases, for any $n > 0$, we note that $\frac{n+2k-2}{n-2k+2} \geq \frac{n+k}{n-k}$ for any $1 < k < \frac{n}{2}$, $\frac{n+2k-2}{n-2k+2} < \frac{n+k}{n-k}$ when $k=1$. Therefore, the price of anarchy is at most $\frac{n+1}{n-1}$ when $k = 1$ and is at most $\frac{n+2k-2}{n-2k+2}$ for $1 < k < \frac{n}{2}$ as desired.
 \end{proof}

The price of anarchy bound we gave above is tight for any anonymous monotonic mechanism. This means that there is no reason to use a different mechanism than $GM^*$ to improve the price of anarchy: any reasonable mechanism will have the same worst-case performance, so we might as well use the natural mechanism $GM^*$. We define {\em monotonic} mechanisms as a version of the definition in \cite{aziz2022strategyproof}, where the definition applied only to line metrics. In our work, a monotonic mechanism $M$ is a mechanism with the following property. Suppose $M(\vec{x})=A$, and some agent $i$ switches their reported location to be directly on top of $A$, thus expressing their extreme preference for $A$ being chosen. Then, $M(\vec{x}_{-i},A)=A$. In other words, someone increasing their preference for $A$ should not cause the mechanism to stop selecting $A$. An {\em anonymous} mechanism is simply one which does not use the identities of the clients to determine the outcome.
 
 \begin{theorem}
 	For any anonymous monotonic mechanism, there exists an instance where the worst Nash equilibrium is a $\frac{n+2k-2}{n-2k+2}$ approximation of the true optimal solution when $k < \frac{n}{2}$. For $k\geq \frac{n}{2}$, there are instances when the price of anarchy of any such mechanism is at least $\Omega(n)$. 
 	\label{thm:poatight}
 \end{theorem}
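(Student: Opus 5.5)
The plan is to build, for any anonymous monotonic mechanism $M$, a two-facility line instance in which a bad facility $W$ is supported by a Nash equilibrium. The construction mirrors the equality cases in the proof of Theorem \ref{thm:poa}: there, the slack $(k-2)d(W,O)$ for truthful clients plus $k\,d(W,O)$ for liars is what produces $2k-2$, and the lower-bound instance should make every one of those triangle inequalities tight.

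\textbf{Step 1: reduce $M$ to a threshold.} Put facilities $O$ at $0$ and $W$ at $2$, and restrict attention to profiles $R_s$ in which $s$ reports are at $O$ and $n-s$ reports are at $W$. Since $M$ is anonymous, $M(R_s)$ depends only on $s$. Monotonicity shows that if $M(R_s)=W$, then moving any $O$-report onto $W$ keeps $W$, so $M(R_{s'})=W$ for all $s'<s$. The symmetric statement holds for $O$. Hence there is a threshold $s^*$ with $M(R_s)=W$ exactly for $s\le s^*$.

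Anonymity does not give neutrality between $O$ and $W$, so I cannot assume where $s^*$ lies. Instead I will build the instance around whichever facility is favoured, reflecting the line and renaming the facilities if necessary. Without loss of generality, suppose $W$ still wins with close to half the reports at $O$, namely $s^*\ge \lceil n/2\rceil-1$ or so.

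\textbf{Step 2: place the clients.}
\begin{itemize}
\item Put $s^*$ truthful clients at $O$ and the remaining truthful clients at $W$.
\item Put the $k$ liars, all reporting $W$, at true positions as close to $O$ as possible while they remain at worst indifferent under every deviation.
\end{itemize}
Concretely, the reported profile is $R_{s^*}$, so $W$ wins. Any single liar who deviates to $O$ produces $R_{s^*+1}$, where $O$ wins. Therefore a liar can only be at equilibrium if they do not strictly prefer $O$, which puts them at the midpoint $1$ (indifferent). Alternatively, one liar can sit at $W$ while the other $k-1$ sit at the midpoint.

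To keep the $O$-report count at $s^*$ with $s^*$ close to $n/2+k-1$, the numbers need tuning. I would choose $s^*$ and the truthful split so that $O$ is truly optimal. The cost ratio is then roughly
\[
\frac{(n-2k+2)/2\cdot 2+(k-1)\cdot 1+\dots}{\dots},
\]
and the distances should be adjusted with an $\epsilon$ so that the ratio is exactly $\frac{n+2k-2}{n-2k+2}$.

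\textbf{Step 3: deviations to arbitrary points.} This is the main obstacle. A deviating liar may report any point, not only $O$ or $W$, and $M$ is arbitrary on such profiles. My first attempt is to argue that, with only the two facilities $O$ and $W$, the only way to hurt a liar is to move the outcome to $O$. Since liars are placed at indifference, or at $W$, no deviation can strictly improve their cost whatever $M$ does. Making the liars indifferent therefore sidesteps the need to control $M$ off the grid, and I expect this to cost exactly the $-2$ in the bound.

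\textbf{Step 4: the case $k\ge n/2$.} Let the liars form a reported majority at $W$ while all truly sitting at $O$, together with truthful clients, and take $W$ far away. Since the deviation-to-$O$ profile is still $W$-winning by the threshold argument, the liars are at equilibrium. The true cost is $\Theta(n)$ times optimal, because only $O(1)$ clients are near $W$.
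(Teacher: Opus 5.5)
Your threshold reduction in Step 1 is sound, but Step 2 cannot deliver the claimed ratio. You place the reported profile exactly at the threshold ($s^*$ reports at $O$). Any liar who strictly prefers $O$ can then flip the outcome, so you are forced to put the liars at the midpoint. An indifferent liar adds the same amount to the cost of $O$ and to the cost of $W$, so this can only pull the ratio toward $1$, and no $\epsilon$-tuning repairs that.

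Concretely, take a mechanism whose threshold is $s^*=\frac{n}{2}$, e.g.\ $GM^*$ with $W$ lower-numbered. Your instance then has $\frac{n}{2}$ truthful clients at $O$, $\frac{n}{2}-k$ at $W$, and $k$ liars at the midpoint. This gives costs $n-k$ for $O$ and $n+k$ for $W$, a ratio of only $\frac{n+k}{n-k}$. Your variant with one liar at $W$ gives $\frac{n+k-1}{n-k+1}$. Both are strictly below $\frac{n+2k-2}{n-2k+2}$ for every $k>2$; the difference of cross-products is $2n(k-2)$. So the $-2$ in the bound does not come from indifference, and your construction does not prove the statement for $GM^*$ itself.

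The missing idea is to place the equilibrium one report inside the $W$-winning region rather than on its boundary. Name the points so that $W$ wins the $\frac{n}{2}$--$\frac{n}{2}$ split, and use the profile with $\frac{n}{2}+1$ reports at $W$ and $\frac{n}{2}-1$ at $O$:
\begin{itemize}
\item $\frac{n}{2}-1$ truthful clients at $O$;
\item $\frac{n-2k}{2}+1$ truthful clients at $W$;
\item all $k$ liars truly at $O$ but reporting $W$.
\end{itemize}
A single liar switching to $O$ only produces the tie, which $W$ still wins. So liars who strictly prefer $O$ are stuck anyway, and the ratio is $\frac{n/2-1+k}{(n-2k)/2+1}=\frac{n+2k-2}{n-2k+2}$.

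Your Step 3 obstacle also disappears. The paper simply takes the metric space to be the two points $\{O,W\}$. In any case, monotonicity controls off-grid reports: if some deviation to a point $y$ made $O$ win, then moving that liar from $y$ onto $O$ would keep $O$ winning. That contradicts the fact that deviating to $O$ leaves $W$ as the winner.

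Finally, your Step 4 already uses exactly this reasoning (liars truly at $O$, with the deviation to $O$ still $W$-winning). That contradicts your Step 2 claim that liars preferring $O$ cannot be at equilibrium. And without the monotonicity observation above, Step 4 is exposed to the same off-grid concern you raised in Step 3.
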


 \begin{proof}
\begin{figure}[h!]
\centering
\begin{tikzpicture}[node distance=4cm, every node/.style={scale=1}]
    % Nodes
    \node[draw, rectangle, fill=cyan!30] (O) {O};
    \node[draw, rectangle, fill=cyan!30, right=of O] (W) {W};

    % Connection
    \draw (O) -- (W) node[midway, above] {1};

    % Labels under nodes
    \node[below=0.5cm of O] {\(\frac{n}{2} - 1\) clients in \(T\)};
    \node[below=1cm of O] {\(k\) clients in \(L\)};
    
    \node[below=0.5cm of W] {\(\frac{n - 2k}{2} + 1\) clients in \(T\)};
\end{tikzpicture}
\caption{An instance of the facility location problem where there are two facility locations $O$ and $W$. All the liars' true locations are at $O$ and there are \(\frac{n}{2} - 1\) clients in \(T\) at $O$ and \(\frac{n - 2k}{2} + 1\) clients in \(T\) at $W$. $O$ and $W$ are distance 1 apart. }
\label{fig:poa}
\end{figure}
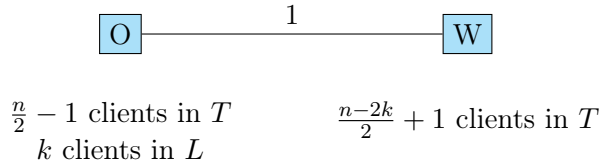

We consider the instance in Figure \ref{fig:poa}. There are two facility locations $O$ and $W$, with $W$ being lower-numbered than $O$. All the clients in $L$ are located at $O$, and there are \(\frac{n}{2} - 1\) clients in \(T\) at $O$ and \(\frac{n - 2k}{2} + 1\) clients in \(T\) at $W$. In addition, $d(O,W) = 1$. These are the true locations of the clients. We can see that choosing $O$ gives us a total cost of $\frac{n-2k}{2}+1$ while $W$ gives us a total of $\frac{n+2k}{2}-1$. This means that $O$ is the optimal solution. However, if we consider the case $\vec{x} = (W,W,\cdots, W)$, $(\vec{t};\vec{x})$ would be a Nash equilibrium and $W = GM^*(\vec{t};\vec{x})$ as more than half of the clients would be at $W$. Then, the price of anarchy would be $\frac{n + 2k - 2}{n -2k + 2}$ as desired. 

Next, we claim that such a bound holds for any monotonic mechanism $M$. Suppose $M$ is given an instance with two facility locations $A$ and $B$ with $d(A,B)=1$, with these being the only points in the metric space. Without loss of generality, assume that when $\frac{n}{2}$ clients report being on $A$ and $\frac{n}{2}$ report being on $B$, then $M$ chooses $A$ as the winner. Since $M$ is monotonic, then when $\frac{n}{2}+1$ clients report being on $A$ while $\frac{n}{2}-1$ report being on $B$, $M$ will still choose $A$ as the winner. This is a Nash equilibrium for $M$: any single client switching from $A$ to $B$ will result in $A$ still winning since $\frac{n}{2}$ clients will report being on top of $A$, and any client switching from $B$ to $A$ will result in $A$ winning due to monotonicity. Now consider the case when $A=W$ and $B=O$ in the example above, with the true client locations as in Figure \ref{fig:poa}. This gives us the price of anarchy bound of $\frac{n + 2k - 2}{n -2k + 2}$ for any anonymous monotonic mechanism when $k<\frac{n}{2}$.

% Next, we claim that such a bound holds for any monotonic mechanism. We first define monotonic mechanisms as follows. Given two clients' location profiles on a real line, $\vec{x} \in [0,1]^n, \vec{x'} \in [0,1]^n$, we say that $\vec{x} < \vec{x'}$ iff for all $i\in C
% $, $x_i \leq x'_i$ and for some $i\in C
% $, $x_i < x'_i$. Then, a mechanism $M$ is {\it monotonic} if $M(\vec{x}) \leq M(\vec{x'})$ for all $\vec{x} < \vec{x'}$ (note that this is defined on a line, modified from the definition of strictly monotonic mechanisms in \cite{aziz2022strategyproof}). Given any anonymous monotonic mechanism $M$, we consider the same example as shown in Figure \ref{fig:poa} with $\vec{x} = (W,W,\cdots, W)$, $(\vec{t};\vec{x})$. If $M(\vec{t};\vec{x}) = W$, then we are done and would get the same bound as previously discussed. Assume $M(\vec{t};\vec{x}) = O$, we note that there are $n/2 - 1$ clients at $O$ and $n/2 + 1$ clients at $W$ with reported location profile $\vec{x}$. Now, suppose two clients move from $W$ to $O$ so there are $n/2 + 1$ clients at $O$ and $n/2 - 1$ clients at $W$ with reported location profile $\vec{x'}$. Then $M(\vec{x'}) = W$ because $M$ is anonymous, but $W > O, \vec{x'} < \vec{x}$, meaning that $M$ is not monotonic, a contradiction. Therefore, we must have $M(\vec{t};\vec{x}) = W$. Hence, the above bound holds for any anonymous monotonic mechanism. 

We now consider the case $k \geq \frac{n}{2}$. We claim that the price of anarchy is $\Omega(n)$. To show this, consider the example where there are two locations $O, W$, with all clients' true location except one client in $T$ at $O$, and that one client's location is at $W$. Suppose that $W$ is preferred by the mechanism in case of ties. It is obvious that $O = \text{argmin}_{A\in \F}\sum_{i \in \C}d(i,A)$. Now, consider the case where all clients in $L$ are reporting to be at $W$ such that $\vec{x} = (W, W, \cdots, W)$. Since there are at least half of the clients reporting being at location $W$, then even if one of the clients in $L$ deviates to location $O$, this is the case where we have $(\vec{t};\vec{x}_{-i}, O)$, and it is easy to see that $GM^*(\vec{t};\vec{x}_{-i}, O) = W$ for any $i \in L$. This means that $\vec{x} = (W, W, \cdots, W)$ is a Nash equilibrium with $GM^*(\vec{t};\vec{x}) = W$. However, we note that the cost for choosing location $W$ is a factor of $n-1$ greater than choosing the optimal solution $O$. As $n \rightarrow \infty$, this becomes unbounded.

The same argument works for any anonymous monotonic mechanism $M$. Suppose again, as above, that $A$ is a Nash equilibrium for $M$. Then it could be that the true locations are as in the example in the previous paragraph, with $A=W$ and $B=O$, in which case the price of anarchy of $M$ would be $\Omega(n)$ when $k\geq \frac{n}{2}$.
 \end{proof}

  \section{Strong Nash Equilibrium}
  \label{sec:sne}
In this section, we consider the case when entire arbitrary coalitions of clients can collude and misreport their locations to gain a benefit. Thus, we study the existence and quality of strong Nash equilibrium, which is a solution stable against coalitional deviations. 
We have shown that Nash Equilibrium always exists in arbitrary metric spaces. However, this is not true for strong Nash equilibrium. 

\begin{theorem}
Strong Nash equilibrium does not always exist for Mechanism $GM^*$ in a general metric space.
	\label{thm:sne_gen}
\end{theorem}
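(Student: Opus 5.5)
The plan is to construct a small explicit counterexample in which the strategic clients' preferences over three facilities form a Condorcet cycle. I will then show that whichever facility is chosen, some coalition of two liars can jointly misreport and force a facility that both of them strictly prefer. Since $S_I$ lets liars report any point of the metric space, a coalition can place its reports directly on top of the target facility. This should be decisive as long as the facilities are equidistant, so that one remaining report cannot undo the effect of two.

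\textbf{The instance.} I take $k=n=3$, so $T=\emptyset$; if truthful clients are required, I would place them symmetrically, e.g.\ at a hub equidistant from all facilities, so they contribute equally to every perceived cost. The facilities are $\F=\{A,B,C\}$ with pairwise distances $2$. The clients are $c_1,c_2,c_3$ with:
\begin{itemize}
\item $d(c_1,A)=1$, $d(c_1,B)=1.5$, $d(c_1,C)=1.9$;
\item $d(c_2,B)=1$, $d(c_2,C)=1.5$, $d(c_2,A)=1.9$;
\item $d(c_3,C)=1$, $d(c_3,A)=1.5$, $d(c_3,B)=1.9$;
\item all client--client distances equal to $3$.
\end{itemize}
Thus $c_1$ ranks $A\succ B\succ C$, $c_2$ ranks $B\succ C\succ A$, and $c_3$ ranks $C\succ A\succ B$. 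I then check the triangle inequality over all triples; the only nontrivial ones are of the forms facility--facility--client and client--client--facility. Representative checks are $2\le 1+1.5$, $1.9\le 1+2$, and $3\le 1.9+1.9$. If the client--client distance causes trouble, I would instead realize the metric as a shortest-path metric of a weighted graph, which is automatically a metric.

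\textbf{The deviation argument.} Suppose some profile is an SNE with winner $A$; the cases $B$ and $C$ follow by the cyclic symmetry of the instance. Both $c_2$ and $c_3$ strictly prefer $C$ to $A$, since $1.5<1.9$ and $1<1.5$. Let the coalition $I=\{c_2,c_3\}$ both report $C$, and let $x$ be $c_1$'s report. For any $D\neq C$, the two coalition reports contribute $4$ to $cost_D$ and $0$ to $cost_C$. By the triangle inequality, $d(x,C)-d(x,D)\le d(C,D)=2$. Hence
\[
cost_C-cost_D\le 2-4<0 .
\]
So $C$ wins strictly, with no tie-breaking involved, and every member of $I$ strictly improves. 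This contradicts the definition of SNE. Symmetrically, if the winner is $B$, the coalition $\{c_1,c_3\}$ moves to $A$; if it is $C$, the coalition $\{c_1,c_2\}$ moves to $B$.

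\textbf{Expected obstacle.} The main thing to verify carefully is the metric itself: that the asymmetric client-to-facility distances satisfy the triangle inequality together with the client--client distances, and that any truthful clients really contribute equally to all three facilities. The strictness in the deviation step is robust because of the margin $4>2$, so ties are never an issue.
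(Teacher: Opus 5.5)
Your approach is the paper's: a Condorcet cycle among three strategic clients with $T=\emptyset$, where for each possible winner the two clients preferring the next facility in the cycle both report that facility. Your margin argument is more explicit than the paper. Two reports on the target add $2\,d(C,D)$ to every rival $D$'s perceived cost, while the third report can shift the difference by at most $d(C,D)$. The paper simply asserts that the coalition changes the winner. In the paper's hexagon (the cycle $A,i,B,k,C,j$ with alternating edge weights $1$ and $2$), all facility pairs are at distance $3$, so the same margin argument applies there.

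However, the instance as you wrote it is not a metric. With client--client distances equal to $3$:
\[
d(c_1,c_3)=3>d(c_1,A)+d(A,c_3)=1+1.5,
\qquad
d(c_1,c_2)=3>d(c_1,B)+d(B,c_2)=2.5,
\]
and symmetrically for $c_2,c_3$ via $C$. Your check $3\le 1.9+1.9$ tests the loosest triangle rather than the tightest one.

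The repair is easy and is essentially your stated fallback. Any common client--client distance in $[0.9,2.5]$ satisfies every triangle inequality. The shortest-path closure gives exactly $2.5$ and does not shorten any prescribed client--facility or facility--facility distance. Since your deviation argument never uses client--client distances, the proof then goes through unchanged.
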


\begin{proof}
\begin{figure}[h!]
\centering
\begin{tikzpicture}[every node/.style={scale=1.1}, node distance=2cm]

    % Facility nodes (green rectangles)
    \node[draw, rectangle, fill=green!30] (A) at (0,3) {A};
    \node[draw, rectangle, fill=green!30] (B) at (-2,0) {B};
    \node[draw, rectangle, fill=green!30] (C) at (2,0) {C};

    % Client nodes (blue circles)
    \node[draw, circle, fill=cyan!30] (i) at (-0.9,1.7) {i};
    \node[draw, circle, fill=cyan!30] (j) at (1.3,1.1) {j};
    \node[draw, circle, fill=cyan!30] (k) at (-0.4,0) {k};

    % Edges and distances
    \draw (A) -- (i) node[midway, left] {1};
    \draw (i) -- (B) node[midway, left] {2};
    \draw (A) -- (j) node[midway, right] {2};
    \draw (j) -- (C) node[midway, right] {1};
    \draw (B) -- (k) node[midway, below] {1};
    \draw (k) -- (C) node[midway, below] {2};

\end{tikzpicture}
\caption{An instance of a facility location problem with graph metric. There are three facility locations $A,B,C$ and three clients $i,j,k$ with the distance between them labeled.}
\label{fig:cycle}
\end{figure}
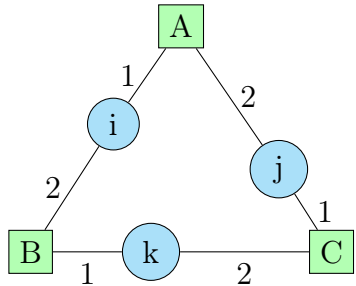
	We construct an example where no strong Nash equilibrium exists; this example is essentially a Condorcet cycle embedded into a metric space. As shown in Figure \ref{fig:cycle}, there are three facility locations $A,B,C$, and three clients $i,j,k\in L$, with $T=\emptyset$. We note that if $A$ is chosen, then $j$ and $k$ can misreport to locate at $C$ and alter the winner to be $C$. However, if $C$ is chosen, $i$ and $k$ can misreport to locate at $B$ and alter the winner to be $B$. Similarly, if $B$ is the chosen, $i$ and $j$ can misreport to locate at $A$ and alter the winner to be $A$. Therefore, there exists no strong Nash equilibrium. 
\end{proof}

% An example with no SNE is simple to construct: it is essentially the Condorcet cycle embedded into a metric space (see Appendix). 
However, if we restrict the metric space to a line, then there always exists a strong Nash equilibrium. 

\begin{theorem}
Every client reporting the facility location in $\F$ closest to them is a strong Nash equilibrium for Mechanism $GM^*$ on a line metric. 
	\label{thm:sne_line}
\end{theorem}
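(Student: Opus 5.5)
Throughout, write $x_i$ for the facility in $\F$ closest to liar $i$; this is the report prescribed in the statement of Theorem \ref{thm:sne_line}. Let $W = GM^*(\vec{t};\vec{x})$ be the resulting winner. I would argue by contradiction. Suppose some coalition $I \subseteq L$ has a joint deviation $x'_I$ with new winner $W' \neq W$ such that $d(i,W') < d(i,W)$ for every $i \in I$. By symmetry of the line I first treat the case where $W'$ lies to the right of $W$. The mirrored case is the same except for tie-breaking, which is where lower-numbered (leftmost) facilities matter, so I will track ties separately.

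The first step is to locate the coalition members. Each $i \in I$ strictly prefers $W'$ to $W$, so the true location $i$ lies strictly to the right of the midpoint of $W$ and $W'$. Hence any facility strictly left of $W$ is farther from $i$ than $W$ is. Also $d(i,x_i) \leq d(i,W') < d(i,W)$, so $x_i \neq W$. Together these give $x_i > W$: every member's current report is strictly to the right of $W$.

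Next I would compare perceived costs through the function $g_{A,B}(y) = d(y,B) - d(y,A)$. On a line this function is nonincreasing in $y$ when $B$ is to the right of $A$. It equals $-d(A,B)$ for all $y \geq B$, and this is its minimum possible value.

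Consider first the easy subcase in which every member's report satisfies $x_i \geq W'$. Then each member's contribution $g_{W,W'}(x_i)$ is already the minimum $-d(W,W')$. Any deviation can therefore only increase $cost_{W'} - cost_W$. Since $W$ beat $W'$ before, it still does: originally either $cost_W < cost_{W'}$, or there is a tie and $W$ is lower-numbered because it lies to the left. This contradicts $W'$ being the new winner.

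The remaining subcase, which I expect to be the main obstacle, is when some member's report $x_i$ is a facility strictly between $W$ and $W'$. Then a deviation to the right can genuinely shift perceived cost toward $W'$. My plan here is to use convexity of the perceived cost $B \mapsto \sum_y d(y,B)$ along the line, together with the fact that each member weakly prefers $x_i$ to $W'$. Let $F$ be the leftmost member report, so $W < F$. I would argue as follows.
\begin{enumerate}[(a)]
\item After the deviation, by the computation above with $F$ in place of $W'$, $W$ still beats $F$ in perceived cost, since the members' original contributions relative to $(W,F)$ were already minimal.
\item By convexity of perceived cost on the line, if $W'$ beats $W$ while $F$ lies between them, then $F$ would also beat $W$.
\end{enumerate}
These two statements contradict each other.

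The delicate points are the handling of ties in (b) and the left-side mirror case. When $W'$ is to the left of $W$, ties favor $W'$, so a weak inequality that was harmless on the right side now becomes the whole issue. I would first check whether the original tie must already have been broken in favor of $W$ in a way that forces strictness. Failing that, I would redo the convexity argument using strict preference of the members.
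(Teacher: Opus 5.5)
Your argument is correct, but at the crucial step it uses a different tool than the paper.

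\textbf{Where you agree with the paper.} Your case split matches the paper's. Its case $Y_R=\emptyset$ is your easy subcase, and both handle it by the same observation: the members' reports already make the minimal contribution to $cost_{W'}-cost_W$ (the paper's $\Delta_{Zi}\le\Delta_{Yi}$).

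\textbf{Where you differ.} For the subcase where some member reports a facility strictly between the two candidates, the paper uses the median characterization of Lemma~\ref{lemma:optline}. It shows that both reported medians lie strictly to the right of $x_r$, the rightmost member report in $(Y,Z)$. Members' deviations cannot move $Med_L$ out of $(x_r,\infty)$, so $Y$ is never in $[Med_L,Med_R]$ nor the nearest facility to $Med_L$. The mirror case then needs a further split according to whether $x_l$ and $Z$ lie in $[Med_L,Med_R]$.

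You instead apply the same monotonicity of $g$ to the pair $(W,F)$ and finish with convexity of $z\mapsto\sum_y|y-z|$, which avoids medians entirely. If you take $F$ to be the point of $\{W'\}\cup\{x_i : i\in I\}$ nearest to $W$, this even absorbs your easy subcase.

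\textbf{What each route buys.} The paper's route gives a reusable structural picture, namely that the winner sits next to the reported median, which it also uses in Lemma~\ref{lemma:lineshift}. Yours is shorter and treats both sides uniformly.

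\textbf{The tie issue you flag.} Your first option already works, and the fallback is unnecessary. On the right side, (b) has no tie issue, since $W'$ is higher-numbered and must win strictly. In the mirror case:
\begin{enumerate}[(1)]
\item $F$ lies strictly left of $W$, so it is lower-numbered. Hence $W$'s original win forces the strict inequality $cost_W(\vec t;\vec x)<cost_F(\vec t;\vec x)$.
\item The members' contributions to $cost_F-cost_W$ were already minimal, so this strict inequality survives any deviation.
\item Meanwhile, $W'$ winning requires only $cost_{W'}\le cost_W$ after the deviation. By convexity, since $F$ lies between $W'$ and $W$, this gives $cost_F\le cost_W$, contradicting (2).
\end{enumerate}
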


To see why this must be, we will first state a useful lemma. Note that this lemma is already known, and is implied in, e.g., \cite{feldman2016voting, anshelevich2017randomized}. We include the proof here for completeness. We will first define some useful terms.

\begin{definition}
    If all clients and facilities are located on a line, we define $Med_L$ and $Med_R$ as follows:
    \begin{enumerate}[(1)]
        \item If the number of clients is odd, let the true location of the median client be $i$, then $Med_L = Med_R = i$.
        \item If the number of clients is even, let the true location of the two median clients be $i$ and $j$, (with $i$ to the left or colocated with $j$), then $Med_L = i, Med_R = j$.
    \end{enumerate}
\end{definition}

\begin{lemma}
	If all clients and facilities are located on a line, if there are any facility locations in the interval $[Med_L, Med_R]$, then all such locations are optimal. Otherwise, the optimal facility location is either the closest facility location to the right of $Med_R$ or the closest facility location to the left of $Med_L$. 
	\label{lemma:optline}
\end{lemma}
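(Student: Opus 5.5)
The plan is to view the total cost $f(y)=\sum_{i\in\C}|i-y|$ as a function of a point $y$ on the line and to use its convexity and piecewise-linear structure. Order the clients by true location and count how many lie strictly left and strictly right of $y$. The slope of $f$ at $y$ is (number of clients left of $y$) minus (number of clients right of $y$). This slope is nonpositive for $y\le Med_L$ and nonnegative for $y\ge Med_R$. On the interval $[Med_L,Med_R]$ the slope is zero: when $n$ is odd the interval is a single point, and when $n$ is even exactly $n/2$ clients lie on each side of any interior point. Hence $f$ is nonincreasing on $(-\infty,Med_L]$, constant on $[Med_L,Med_R]$, nondecreasing on $[Med_R,\infty)$, and attains its global minimum over the real line on $[Med_L,Med_R]$.

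The key steps, in order, are these.
\begin{enumerate}[(1)]
\item Establish the monotonicity of $f$ rigorously. Moving $y$ rightward by $\delta$, while no client lies strictly between the old and new positions, changes $f$ by $\delta$ times (number of clients at or left of $y$ minus number right of it). At or left of $Med_L$ this difference is at most $0$; on the interval it is $0$. The symmetric statement holds on the right.
\item Suppose some facility lies in $[Med_L,Med_R]$. Every such facility attains the global minimum of $f$ over all of $\mathbb{R}$, so all of them are optimal among $\F$.
\item Suppose no facility lies in $[Med_L,Med_R]$. Let $F_L$ be the closest facility to the left of $Med_L$ and $F_R$ the closest facility to the right of $Med_R$.
\begin{enumerate}[(a)]
\item Any facility left of $F_L$ has cost at least $f(F_L)$, because $f$ is nonincreasing on $(-\infty,Med_L]$.
\item Any facility right of $F_R$ has cost at least $f(F_R)$, by the symmetric argument.
\item No facility lies strictly between $F_L$ and $F_R$, so the optimum is one of $F_L$ and $F_R$. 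If one side has no facility at all, the optimum is the existing one.
\end{enumerate}
\end{enumerate}

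The main obstacle is the bookkeeping in step (1): colocated clients, and the case of even $n$ with $Med_L=Med_R$, make the slope counts delicate. I would handle this by working with one-sided differences $f(y+\delta)-f(y)$ directly, counting clients weakly on each side, rather than with derivatives. Everything else follows immediately from the unimodality of $f$.
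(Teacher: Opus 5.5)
Your proof is correct in substance but organized differently from the paper's. The paper never studies the cost as a function on the whole line. It assumes an optimal facility $O$ lies strictly beyond the closest facility $A$ to the left of $Med_L$. It then counts the $r$ clients right of $A$, the $l_1$ clients left of $O$ and the $l_2$ clients between them, and observes that $l_1+l_2<r$ because $A$ is left of $Med_L$. A single exchange step then gives $cost_O-cost_A\ge (r-l_1-l_2)\,d(A,O)>0$. The first claim, that facilities in $[Med_L,Med_R]$ are optimal, is simply asserted as obvious.

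Your one-sided-difference analysis uses the same counting fact, but applied locally, and it delivers the full unimodal shape of $f$. It therefore proves both parts of the lemma uniformly. Once made strict as below, it also gives facts the paper uses implicitly later, e.g.\ in case (iii) of Lemma \ref{lemma:lineshift}. These are that a facility outside $[Med_L,Med_R]$ is strictly worse than one inside it, and that on each side cost grows with distance from the median interval. The paper's route is shorter; yours is more systematic and handles colocated clients cleanly.

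You should tighten one thing. You only claim the one-sided difference is at most $0$ left of $Med_L$. That gives a non-increasing $f$, and hence only that \emph{some} optimal facility is $F_L$ or $F_R$; a facility farther left could a priori tie with $F_L$. The paper's proof gives a strict inequality, and the later applications need the strict version. $GM^*$ breaks ties toward the left, so a facility left of $F_L$ that tied with $F_L$ would be selected. Then the step ``$x_r$ would always be closer'' in Theorem \ref{thm:sne_line} would not go through.

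Your own count fixes this at no cost. For $y<Med_L$, at most $\lceil n/2\rceil-1$ clients lie at or left of $y$, so the difference is at most $-1$. Hence $f$ is strictly decreasing on $(-\infty,Med_L]$, and symmetrically strictly increasing on $[Med_R,\infty)$.

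Relatedly, your sentence ``at or left of $Med_L$ this difference is at most $0$'' is false at $y=Med_L$ when $Med_L=Med_R$, because the right-difference there is positive. You only need the claim for $y<Med_L$.
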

\begin{proof}
 We first note that it is obvious that placing the facility at the interval $[Med_L, Med_R]$ is the optimal solution if any of such locations are in $\F$. Then, we assume these locations are not in $\F$. For the sake of contradiction, we assume the optimal facility location is neither the closest facility location to the right of $Med_R$ nor the closest facility location to the left of $Med_L$. Without loss of generality, assume the optimal location $O$ is to the left of the closest facility location $A$ to the left of $Med_L$. Let the number of clients to the right of $A$ be $r > 0$, the number of clients to the left of $O$ be $l_1 > 0$, the set clients in between $O$ and $A$ be  $S, |S| = l_2 > 0$. Since $A$ is the closest facility to the left of $Med_L$, we have $l_1 + l_2 < r$. Then, we have $cost_O(\vec{t};\vec{x}) - cost_A(\vec{t};\vec{x}) > (r-l_1)\cdot d(A,O) + \sum_{i \in S}\left(d(i, O) - d(i, A) \right) > (r-l_1 - l_2)\cdot d(A,O) > 0$. But this means that $O$ is not the optimal solution, a contradiction. %The other case is symmetric. %If all clients and facilities are located on a line, the optimal facility location is either the closest facility location to the right of the median client or the closest facility location to the left of the median client. If the closest location is the median client's location, then the optimal facility location is the median location. 
\end{proof}

Using this lemma, we can now prove Theorem \ref{thm:sne_line}. We do this by showing that if $Z$ is chosen by the mechanism, then it is not possible for any coalition of clients who all prefer $Y$ to $Z$ to misreport and cause $Y$ to win, since they would not be able to make $Y$ become a facility next to the median reported client. 

\begin{proof}

For any client $i\in L$, let $x_i$ ($i$'s reported location) be the facility location closest to them such that $x_i = \text{argmin}_{A\in \F}d(i,A)$. Let $Z = GM^*(\vec{t};\vec{x})$.  Now, we consider some location $Y$, located to the left of $Z$ (in the interval $(-\infty, Z)$). We will show that no group of clients who strictly prefer $Y$ can alter their reported location so that $GM^*$ would choose $Y$. We note that any clients to the right of $Z$ (in the interval $[Z, \infty)$) would prefer $Z$ over any location to the left of $Z$, and thus would not be part of such a deviating coalition.

%w.l.o.g, we consider clients in $L$ to the left of $Z$. 
%$$\Delta_z = \left(\sum_{j \in T} d(j, z) + \sum_{j \in L\setminus\{i\}} d(x_j, z) + d(x_i', z)\right) - \left(\sum_{j \in T} d(j, z) + \sum_{j \in L} d(x_j, z)\right)$$
%$$\Delta_y = \left(\sum_{j \in T} d(j, y) + \sum_{j \in L\setminus\{i\}} d(x_j, y) + d(x_i', y)\right) - \left(\sum_{j \in T} d(j, y) + \sum_{j \in L} d(x_j, y)\right)$$

We first consider the case where there exist clients in $L$ whose reported location is to the right of $Y$ (in the interval $(Y, \infty)$) and prefer $Y$ over $Z$. Call this set of clients $Y_R$. Recall that $Y_R$ does not contain clients to the right of $Z$, so all of the clients in $Y_R$ are in the interval $(Y, Z)$. Take the rightmost $x_i$ for $i \in Y_R$, denote it by $x_r$. Then, by the definition of $x_r$, only clients who reported in the interval $(-\infty, x_r]$ 
% to the left and on top of $x_r$ 
have an incentive to change their reported location to make $Y$ win. Since $Z = GM^*(\vec{t};\vec{x})$, by Lemma \ref{lemma:optline}, we have that the $Med_R$ of $(\vec{t};\vec{x})$ must be to the right of $x_r$ (in the interval $(x_r, \infty)$), otherwise $Z$ couldn't have won. Note that $Med_L$ must also be to the right of $x_r$ (in the interval $(x_r, \infty)$). To see why, assume to the contrary that $Med_L$ is on top of or to the left of $x_r$ (in the interval $(-\infty, x_r]$), which means that $x_r \in [Med_L, Med_R]$. By Lemma \ref{lemma:optline}, $x_r$ is one of the optimal solutions, but since $x_r$ is to the left of $Z$, it is lower numbered and thus $Z$ wouldn't have been chosen by $GM^*$, which is a contradiction. Thus, both $Med_R$ and $Med_L$ must be to the right of $x_r$ (in the interval $(x_r, \infty)$). This means that no matter how clients who prefer $Y$ over $Z$ alter their reported locations, $Med_L$ would still remain in $(x_r, \infty)$, since all such clients have reported $x_i$ in $(-\infty, x_r]$. By Lemma \ref{lemma:optline}, since $x_r \neq Y$ and is a facility location, then $Y$ cannot be the left closest facility ($x_r$ would always be closer) to $Med_L$ or in the interval $[Med_L, Med_R]$, and thus would not be chosen. Therefore, in this case no coalition which prefers $Y$ over $Z$ can change their reported locations and make $Y$ become the winner.

Next, assume that $Y_R = \emptyset$. Then all clients in $L$ who prefer $Y$ over $Z$ have reported locations to the left or on top of $Y$ (in the interval $(-\infty, Y]$). Call this set of clients $Y_L$. For each $i \in Y_L$, let $i$ deviate to $x_i'$. We will consider three cases: (i) $x_i'$ is to the left or on top of $Y$ (in the interval $(-\infty, Y]$), (ii) $x_i'$ is to the right of $Y$ but to the left or on top of $Z$ (in the interval $(Y, Z]$), (iii) $x_i'$ is to the right of $Z$ (in the interval $(Z, \infty)$). In addition, for any $i \in \C$ and $(\vec{t};\vec{x}_{-i}, x_i')$ for some location $x_i'$, let $\Delta_{Zi} = cost_Z(\vec{t};\vec{x}_{-i}, x_i') - cost_Z(\vec{t};\vec{x})$, $\Delta_{Yi} = cost_Y(\vec{t};\vec{x}_{-i}, x_i') - cost_Y(\vec{t};\vec{x})$.
We first consider case (i). Assume that $x_i'$ is at distance $\Delta$ to the left of $x_i$, we have $\Delta_{Zi} = \Delta_{Yi} = \Delta$.
% meaning that such deviation would preserve $GM^*(\vec{t};\vec{x}_{-i},x_i') = z$. 
If $x_i'$ is at distance $\Delta$ to the right of $x_i$, we have $\Delta_{Zi} = \Delta_{Yi} = -\Delta$. In either case, the change in cost for both $Z$ and $Y$ is the same.
Then, we consider case (ii), $x_i'$ is in the interval $ (Y, Z]$. In this case, we have 
\begin{eqnarray*}\Delta_{Zi} = d(Z,x_i') - d(x_i,Z)
= d(Z,x_i') - d(x_i,x_i')-d(x_i',Z)
= -d(x_i,x_i')\\
= -d(x_i,Y) - d(x_i',Y)
\leq -d(x_i,Y)
\leq d(x_i',Y) -d(x_i,Y)
= \Delta_{Yi}.
\end{eqnarray*}
Using a similar argument, we can see that $\Delta_{Zi}\leq \Delta_{Yi}$ also holds for case (iii), $x_i'$ is in the interval $(Z, \infty)$. Therefore, the total differences after all clients in $Y_L$ alter their reported location would have the property $\sum_{i\in Y_L}\Delta_{Zi} \leq \sum_{i\in Y_L}\Delta_{Yi}$.
Given that $GM^*(\vec{t};\vec{x}) = Z$, this means that the cost of choosing $Y$ does not become lower as compared to the cost of choosing $Z$ for $(\vec{t};\vec{x}_{-i},x_i')$, so $Z$ would still be chosen over $Y$. 
%This would either make $GM^*(\vec{t};\vec{x}_{-i},x_i') = Z$ or to the right of $z$. 
% Therefore, we can conclude that no $i$ would have the incentive to alter their reported location $x_i$ to the right of $Y$. Combined with the previous result, we can see that whatever a set of clients in $Z_L$ alter their reported location, $y$ would not be chosen.

% Then, we consider clients in $L$ to the right of $y$ who prefer $y$ over $z$. Call this set of clients $Y_R$. Recall that this $Y_R$ does not contain clients to the right of $z$. Take the rightmost $x_i$ for $i \in Y_R$, denote it by $x_r$. Since $z = GM^*(\vec{t};\vec{x})$, by Lemma \ref{lemma:optline}, we have that the median of $(\vec{t};\vec{x})$ must be to the left of $x_r$, otherwise $z$ couldn't have won. Now, since $x_r \neq y$, $y$ would not be chosen since $y$ is not the left or right closest facility to the median reported location, no matter how clients in $Y_R$ and $Y_L$ alter their reported location. 

Finally, we consider the case where $Y$ is located to the right of $Z$ (in the interval $(Z, \infty)$). This case is mostly symmetric to the case where $Y$ is located to the left of $Z$ (thus we can apply the same proof) except when there exist clients in $L$ whose reported location is to the left of $Y$ (in the interval $(-\infty, Y)$) and prefer $Y$ over $Z$. Call this set of clients $Y_L$. Note that $Y_L$ does not contain clients to the left of $Z$, so they are all in the interval $(Z, Y)$. Take the leftmost $x_i$ for $i \in Y_L$, denote it by $x_l$. Then, by the definition of $x_l$, only clients who reported in the interval $[x_l, \infty)$
%to the right and on top of $x_l$ 
have an incentive to change their reported location to make $Y$ win. By Lemma \ref{lemma:optline}, for $Z$ to be selected, it must be that $Med_L\in (-\infty, x_l)$. Thus we have either $x_l$ is to the right of both $Med_L$ and $Med_R$ (in the interval $(Med_R, \infty)$), or both $x_l$ and $Z$ are in the interval $[Med_L, Med_R]$. For the first case, the proof follows exactly as the case where $Y$ is located to the left of $Z$. For the second case, since $Med_L$ is either on top or to the left of $Z$ (in the interval $(-\infty, Z]$), no matter how clients who prefer $Y$ over $Z$ alter their reported location, $Z$ would still be either in the interval $[Med_L, Med_R]$  or to the right of $Med_R$ (in the interval $(Med_R, \infty)$) but closer to $Med_R$ than $Y$. This is because only clients with reported location in the interval $(Z, \infty)$ have an incentive to deviate, but $Med_L$ was in the interval $(-\infty, Z]$, so the new $Med_L$ after the deviation must also be in the same interval. Thus, by Lemma \ref{lemma:optline}, we know that $Y$ could not be chosen by $GM^*$ no matter how those who prefer $Y$ change their reported locations (for the former case where $Z$ is in the interval $[Med_L, Med_R]$ after the changes, it is also because $Z$ is lower numbered than $Y$ as it is to the left of $Y$).

	Since the choice of $Y$ is arbitrary, we can conclude that no group of clients can misreport and improve their costs. Hence, every client reporting their closest facility location is a strong Nash equilibrium on a line. 
\end{proof}

Next, we will analyze the quality of strong Nash equilibria. For the next results, let $O$ be the true optimal solution, and $X$ be the winner of an arbitrary strong Nash equilibrium. Let $L_O$ be the set of clients in $L$ that strictly prefer $O$ over $X$, where $|L_O| = p$, and $L_X$ be the set of clients in $L$ that prefer $X$ over $O$ (or are indifferent), where $|L_X| = k - p; L_X \cap L_O = \emptyset, L_X \cup L_O = L$. %In addition, for simplicity, we will denote the true location of client $i$ by $i$ and the reported location by $j \in L$ by $x_j$.

First, we will show a useful lemma. This lemma states the following. Suppose that some agents that prefer $B$ over the current winner $W$ report their location to be at $B$ instead. Then, the resulting winner may not be $B$ itself, but it will either be $W$ (the agents changing their reported location does not change the outcome), or it will be something which all the deviating agents prefer to $W$ (so it is in the best interest of these agents to switch their reported location to $B$). This lemma rules out the annoying possibility that, by switching their reported location to $B$, the result is an outcome which is neither $W$ nor a location which the deviating agents prefer. 

\begin{lemma}
	If all clients and facilities are located on a line, let the current state where $W$ wins have the location profile $(\vec{t};\vec{x}),$ and let $L_B$ be the set of clients in $L$ that strictly prefer $B$ over $W$, $L_m \subseteq L_B$, $ \vec{x'} = (B, B, \cdots, B)$ consist of $|L_m|$ occurrences of $B$. Let $A = GM^*(\vec{t};\vec{x}_{-L_m},\vec{x'})$. Then, we must have either $A = W$ or for all $i \in L_m$, $d(i, A) < d(i, W)$.  
	\label{lemma:lineshift}
\end{lemma}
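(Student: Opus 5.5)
We work on the line and assume without loss of generality that $B$ lies to the left of $W$; the other case is symmetric, up to the tie-breaking detail noted below. Every $i \in L_m$ strictly prefers $B$ to $W$, so $d(i,B) < d(i,W)$. On the line this means each such $i$ lies strictly to the left of the midpoint of $[B,W]$. In particular, the set of locations every $i\in L_m$ strictly prefers to $W$ contains every facility to the left of $W$ that is at least as close to $W$ as $B$ is. It also contains the closed interval $[B, W)$ minus points at or beyond the reflections. To be precise, location $A$ is strictly preferred by $i$ iff $|i-A|<|i-W|$. For $A$ to the left of $W$, this holds iff $A > 2i - W$, and since $2i - W < B$, every $A\in[B,W)$ qualifies. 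So the plan is to show that the new winner $A$ satisfies $A=W$ or $A\in[B,W)$.

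The key step is a comparison of perceived cost differences, in the spirit of the case analysis in the proof of Theorem \ref{thm:sne_line}. For any facility $C$, let $\Delta_C = cost_C(\vec{t};\vec{x}_{-L_m},\vec{x'}) - cost_C(\vec{t};\vec{x})$. When a single reported point moves from $x_i$ to $B$, the change is $|B-C|-|x_i-C|$. We want to show two things:
\begin{itemize}
\item For every $C$ strictly to the right of $W$, $\Delta_C \ge \Delta_W$.
\item For every $C$ strictly to the left of $B$, $\Delta_C \ge \Delta_B$.
\end{itemize}
Both follow pointwise for each mover, because $f(y)=|B-y|-|x_i-y|$ is monotone on each side away from $B$. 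For $C > W > B$, the term $|B-C|-|B-W| = |C-W|$, while $|x_i-C|-|x_i-W| \le |C-W|$ by the triangle inequality. Hence $f(C)\ge f(W)$, and symmetrically $f(C)\ge f(B)$ for $C<B$. Summing over $i\in L_m$ gives the two bullet claims.

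Then we combine with optimality at the old profile. $W$ was the minimizer of perceived cost before the move, so $cost_C(\vec t;\vec x)\ge cost_W(\vec t;\vec x)$ for all $C$. Adding the first claim gives $cost_C \ge cost_W$ in the new profile for every $C>W$. For $C<B$ we get only $cost_C-cost_B$ (new) $\ge cost_C - cost_B$ (old), which is not directly enough. Here I would use unimodality instead: on a line, the perceived cost restricted to facilities is a convex function of position. After the move, if the new winner $A$ lay left of $B$, convexity would force $cost_B \le$ some value, giving a contradiction. More cleanly, the new profile only moves points leftward onto $B$ or from left of $B$ to $B$. I would rather argue via Lemma \ref{lemma:optline}: the new median lies to the right of $B$ or in an interval containing $B$, because movers are only placed at $B$. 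Therefore a facility strictly left of $B$ is never the unique closest-left facility while $B$ itself is a facility. Ties among facilities in $[Med_L,Med_R]$ are broken toward lower numbering, which favours leftmost facilities. This is the delicate point: a tie could pick a facility left of $B$. I would check that if such a $C<B$ ties, then $B$ lies in $[Med_L,Med_R]$ only when $C$ does too, and then the strict inequality $\Delta_C>\Delta_B$ handles it (with strictness coming from $C\ne B$ and at least one mover being at $B$).

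The main obstacle is exactly these tie-breaking cases with the lower-numbering rule. This includes the case $C>W$ tying with $W$. There the lower-numbered rule favours $W$ when $B<W$, but in the symmetric case $B>W$ it disfavours $W$, and it must be handled using the strict pointwise inequality when $x_i\ne$ positions at or beyond $W$. Having excluded $C>W$ and $C<B$ as winners, we are left with $A\in[B,W]$, which gives $A=W$ or a location every $i\in L_m$ strictly prefers, as shown in the first paragraph.
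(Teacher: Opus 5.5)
Your case structure matches the paper's: you show the new winner lies in $[B,W]$ by excluding facilities beyond $W$ and beyond $B$. Your exclusion of $C>W$ in the orientation $B<W$ is correct and is essentially the paper's case (ii); it uses the pointwise $\Delta_C\ge\Delta_W$, the old optimality of $W$, and ties going to the lower-numbered $W$.

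\textbf{The gap is in excluding a winner $C<B$.}
\begin{itemize}
\item Your claim that the new median lies right of $B$, or that its interval contains $B$, ``because movers are only placed at $B$'' does not follow. Where the median ends up depends on where it was before the move.
\item The claim is also too weak. If $Med_L'\le C<B\le Med_R'$, then $C$ is optimal and, being lower-numbered, wins.
\item Your fallback, a strict inequality $\Delta_C>\Delta_B$, is false in general. Take a mover whose original report satisfies $x_i\ge B$, e.g.\ $x_i=W$; this is fully consistent with its true location lying left of the midpoint of $[B,W]$. Its contribution to both $\Delta_C$ and $\Delta_B$ is $|B-C|-|x_i-C|=B-x_i=|B-B|-|x_i-B|$. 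If every mover is of this kind, $\Delta_C=\Delta_B$ exactly.
\item Your remark on the mirrored case ($B>W$, $C<W$) tries to extract strictness from the movers in the same way, and fails for the same reason.
\end{itemize}

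\textbf{The missing idea:} the needed strictness comes from the \emph{original} profile, because $W$ beat every lower-numbered facility there.

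Median route (the paper's). In your orientation, $W$ winning over the lower-numbered facility $B$ forces the original $Med_L>B$:
\begin{itemize}
\item if $Med_R\le B$, then $B$ is strictly cheaper than $W$;
\item if $Med_L\le B<Med_R$, then $B$ is optimal and wins the tie.
\end{itemize}
Moved reports land exactly on $B$, so no more reports lie strictly left of $B$ than before. Hence $Med_L'\ge B$. The perceived cost is strictly decreasing on $(-\infty,Med_L']$, so every $C<B$ is strictly worse than $B$ after the move. This is exactly the paper's argument for a winner beyond $B$ (its case (iii) and the mirror).

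Convexity route (within your own framework):
\begin{itemize}
\item $y\mapsto cost_y(\vec t;\vec x)$ is convex on the line.
\item $cost_C(\vec t;\vec x)>cost_W(\vec t;\vec x)$ strictly, because $C$ is lower-numbered yet lost.
\item $B$ lies strictly between $C$ and $W$, so $cost_B(\vec t;\vec x)<cost_C(\vec t;\vec x)$.
\item Adding your $\Delta_C\ge\Delta_B$ gives strict inequality after the move.
\end{itemize}

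In the mirrored orientation ($B>W$) the same source of strictness works. $C<W$ is excluded by the strict $cost_C(\vec t;\vec x)>cost_W(\vec t;\vec x)$. $C>B$ is excluded because convexity gives $cost_B(\vec t;\vec x)\le cost_C(\vec t;\vec x)$, and ties then go to the lower-numbered $B$.
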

\begin{proof}
    We note that we must have $B \neq W$ by the definition of $L_B$, assuming $L_B \neq \emptyset$. We first assume $B$ is to the right of $W$. To simplify notation, we will use $B$, $W$, and $A$ to refer to the locations of the facilities, as well as the facilities themselves. We will proceed with three cases: (i) $A \in [W, B]$ (meaning that $A$ is between $W$ and $B$); (ii) $A \in (-\infty, W)$ (meaning that $A$ is to the left of $W$); (iii) $A \in (B, \infty)$. For case (i), we note that every client in $L_m$ strictly prefers $B$ over $W$, which means that they are in the interval $(\frac{W+B}{2}, \infty)$. Now, given that $A \in [W, B]$, for any $i \in L_m$, we must have $d(i, A) < d(i, W)$, as desired. 

    Next, we consider case (ii), $A \in (-\infty, W)$. We will show that this case is impossible by showing that $W$ would always outperform $A$. To do this, we first observe that the costs for the clients in $L\setminus L_m$ and $T$ do not change, so we will consider changes in costs for clients in $L_m$. First, we consider clients in $L_m$ reported in $\vec{x}$ on top of or to the right of $W$ (in the interval $[W, \infty)$), let this set of clients be $L_R$. We have that for $i \in L_R$,  $cost_A(\vec{t};\vec{x}_{-i},B) - cost_A(\vec{t};\vec{x}) = cost_W(\vec{t};\vec{x}_{-i},B) - cost_W(\vec{t};\vec{x})$. Then, consider clients in $L_m$ reported in $\vec{x}$ on top of or to the left of $A$ (in the interval $(-\infty, A]$), let this set of clients be $L_L$. We have that for $i \in L_L$,  $cost_A(\vec{t};\vec{x}_{-i},B) - cost_A(\vec{t};\vec{x}) = d(A,W)+d(W,B) - d(x_i, A) > d(W,B) - d(x_i, A) - d(W,A) = cost_W(\vec{t};\vec{x}_{-i},B) - cost_W(\vec{t};\vec{x})$. Lastly, consider clients in $L_m$ reported in $\vec{x}$ in between $A$ and $W$ (in the interval $(A, W)$), let this set of clients be $L_M$. We have that for $i \in L_M$, $cost_A(\vec{t};\vec{x}_{-i},B) - cost_A(\vec{t};\vec{x}) = d(A,W)+d(W,B) - d(x_i, A) = d(x_i, W) + d(W,B) > d(W,B) - d(W,x_i) = cost_W(\vec{t};\vec{x}_{-i},B) - cost_W(\vec{t};\vec{x})$. Now, recall that $GM^*(\vec{t};\vec{x}) = W$, this means that $cost_W(\vec{t};\vec{x}) \leq cost_A(\vec{t};\vec{x})$, combining the previous results, we have that the cost of choosing $A$ does not become lower as compared to the cost of choosing $W$ for $(\vec{t};\vec{x}_{-L_m},\vec{x'})$, so $W$ would still outperform $A$ (note that in the case when they produce the same cost, since $GM^*(\vec{t};\vec{x}) = W$, $W$ is lower-numbered than $A$, meaning that $A$ would not be chosen after the deviation as well). 

    Then, we consider case (iii), $A \in (B, \infty)$. We will show that this case is also impossible. We first observe that comparing to $Med_L$ and $Med_R$ for $(\vec{t};\vec{x}),$ $Med_L'$ and $Med_R'$ for $(\vec{t};\vec{x}_{-L_m},\vec{x'})$ would either stay the same as $Med_L$ and $Med_R$ or move towards $B$. We note since $GM^*(\vec{t};\vec{x}) = W$, we must have $Med_L \in (-\infty,B)$, otherwise $W$ wouldn't have been chosen by Lemma \ref{lemma:optline}, since $B$ is a facility location and would be closer to $Med_L$ than $W$ otherwise. This means that $Med_L' \in (-\infty,B]$ by our previous observation. Since $A$ is selected by $GM^*$ after the deviation, Lemma  \ref{lemma:optline} also means that $Med_R' \in (B,\infty)$, since $A$ has to either be the closest facility to the right of $Med_R'$ or inside the interval $[Med_L',Med_R']$ to be selected. But this implies that $B$ is always inside the interval $[Med_L',Med_R']$. Thus $B$ would be chosen by $GM^*$ over $A$ according to Lemma \ref{lemma:optline}: either $A$ is not in $[Med_L',Med_R']$ so it is not optimal, or $A\in [Med_L',Med_R']$, in which case $B$ would still be chosen since it is to the left of $A$ and thus is lower-numbered than $A$. Therefore, $A$ would not be chosen by $GM^*$, which is a contradiction. 

    Finally, consider when $B$ is to the left of $W$. This case is mostly symmetric to the case where $B$ is to the right of $W$ (thus we can apply the same proof), since the mechanism $GM^*$ is completely symmetric except for the tie-breaking rule. There was only one part above in which we applied the fact that the tie-breaking rule favors facilities which are more to the left; this was in case (iii). Thus, we only need to argue that the result holds for the symmetric case of $A \in (-\infty, B)$. Similarly to case (iii) when $B$ is to the right of $W$, we first observe that comparing to $Med_L$ and $Med_R$ for $(\vec{t};\vec{x}),$ $Med_L'$ and $Med_R'$ for $(\vec{t};\vec{x}_{-L_m},\vec{x'})$ would either stay the same as $Med_L$ and $Med_R$ or move towards $B$. Note that since $GM^*(\vec{t};\vec{x}) = W$, we must have $Med_L \in (B,\infty)$, otherwise $W$ wouldn't have been chosen by Lemma \ref{lemma:optline} since $B$ is a facility location and is lower numbered than $W$ (since it is to the left of $W$). This means that $Med_L' \in [B, \infty)$ by our previous observation. This also means that no matter where $Med_R'$ is located, it holds that either $B\in [Med_L', Med_R']$, or $B$ is closer to $Med_L'$ than $A$. Since we have $A \notin [Med_L', Med_R']$,  by Lemma \ref{lemma:optline}, $A$ would not be chosen by $GM^*$ over $B$, which is a contradiction.

    Now, combining all the results above, we can conclude that either $A = W$ or for all $i \in L_m$, $d(i, A) < d(i, W)$. 
\end{proof}

Then, we can obtain the following results. 
\begin{theorem}
If all clients and facilities are located on a line, the strong price of anarchy and the strong price of stability of $GM^*$ are upper bounded by $\frac{n+k}{n-k}$.
	\label{thm:sneq}
\end{theorem}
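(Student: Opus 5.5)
The plan is to bound the cost of an arbitrary SNE winner $X$ against the optimum $O$, using the notation $L_O, L_X, p$ just introduced. Since Theorem \ref{thm:sne_line} shows that an SNE always exists on a line, the set of SNE winners is nonempty and the strong price of stability is at most the strong price of anarchy. So it suffices to prove
\[
\sum_{i\in\C} d(i,X)\le \sum_{i\in\C} d(i,O) + k\, d(O,X).
\]
The standard step then finishes: $d(O,X)\le d(i,O)+d(i,X)$ for every client $i$, so
\[
k\, d(O,X)\le \frac{k}{n}\Big(\sum_{i\in\C} d(i,O)+\sum_{i\in\C} d(i,X)\Big),
\]
and rearranging gives the ratio $\frac{n+k}{n-k}$ for $k<n$.

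\textbf{Case $p=0$.} Every liar weakly prefers $X$, so $d(i,X)\le d(i,O)$ for all $i\in L$. The first chain of inequalities in the proof of Theorem \ref{thm:poa_gen} applies to any winner of the reported profile, and gives
\[
\sum_{i\in T} d(i,X)\le \sum_{i\in T} d(i,O)+k\,d(O,X).
\]
Adding the liars' inequalities $d(i,X)\le d(i,O)$ yields the target bound directly.

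\textbf{Case $p\ge 1$.} Let the whole coalition $L_O$ switch its reports to $O$. By Lemma \ref{lemma:lineshift} with $B=O$ and $L_m=L_O$, the new winner $A$ is either $X$ or a location that every member of $L_O$ strictly prefers to $X$. The latter would be a profitable coalitional deviation, contradicting that $X$ is an SNE, so $A=X$. Hence $X$ beats $O$ in perceived cost under $(\vec{t};\vec{x}_{-L_O},O,\dots,O)$:
\[
\sum_{i\in T} d(i,X)+\sum_{i\in L_X} d(x_i,X)+p\,d(O,X)\le \sum_{i\in T} d(i,O)+\sum_{i\in L_X} d(x_i,O).
\]
Applying $d(x_i,O)\le d(x_i,X)+d(O,X)$ to each $i\in L_X$ gives
\[
\sum_{i\in T} d(i,X)\le \sum_{i\in T} d(i,O)+(k-2p)\,d(O,X).
\]
For the true costs of the liars:
\begin{itemize}
\item each $i\in L_X$ satisfies $d(i,X)\le d(i,O)$;
\item each $i\in L_O$ satisfies $d(i,X)\le d(i,O)+d(O,X)$.
\end{itemize}
Summing everything gives
\[
\sum_{i\in\C} d(i,X)\le \sum_{i\in\C} d(i,O)+(k-p)\,d(O,X)\le \sum_{i\in\C} d(i,O)+k\,d(O,X),
\]
which is the target.

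\textbf{Main obstacle.} The key step is justifying that the coalition's joint deviation to $O$ leaves $X$ as the winner. This is exactly what Lemma \ref{lemma:lineshift} provides, and it is why the argument is restricted to line metrics. Once that is in hand, the rest is triangle-inequality bookkeeping. The deviation in fact yields the slightly stronger coefficient $k-p$, but only $k$ is needed for the stated bound.
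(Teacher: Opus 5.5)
Your proof is correct and follows the paper's argument essentially step for step. The shared steps are: have the coalition $L_O$ report $O$; use Lemma \ref{lemma:lineshift} plus the SNE condition to force $X$ to remain the winner; derive $\sum_{i\in T}d(i,X)\le\sum_{i\in T}d(i,O)+(k-2p)\,d(O,X)$ by the triangle inequality; then do the liar bookkeeping to reach the coefficient $k-p$. The only differences are cosmetic: you split off $p=0$, which the paper's argument covers uniformly since an empty coalition trivially leaves $X$ winning, and you relax $k-p$ to $k$ before rearranging rather than after.
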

% \begin{proof}[Proof Sketch.] We first note that if $O = X$ or is an SNE, it is done. Therefore, assume choosing $O$ under $GM^*$ is not an SNE. Since choosing $X$ under $GM^*$ is an SNE, we know that if all clients in $L_O$ alter their reported locations to be at $O$ in the SNE where $X$ wins, $O$ would still not outperform $X$. More formally, let the SNE where $X$ wins have the location profile $(\vec{t};\vec{x}),$ and set $ \vec{x'} = (O, O, \cdots, O)$. Then, since $X$ is a SNE, we must have $cost_X(\vec{t};\vec{x}_{-L_O},\vec{x'}) \leq cost_O(\vec{t};\vec{x}_{-L_O},\vec{x'})$. Otherwise, $GM^*(\vec{t};\vec{x}_{-L_O},\vec{x'}) \neq X$, and $X$ would not be a SNE. Then, similar to the proof of Theorem \ref{thm:poa}, with the fact that $i \in L_X$ prefers $X$ over $O$, which means that $\sum_{i \in L_X} d(i,X) \leq \sum_{i \in L_X} d(i,O)$ and triangle inequality, we can show that $\sum_{i \in T} d(i,X) + \sum_{i \in L} d(i,X) \leq \frac{n+(k-p)}{n}\sum_{i \in \C} d(i, O) + \frac{k-p}{n} \sum_{i \in \C}d(i,X)$. Rearranging this inequality and we can obtain $\sum_{i \in \C}d(i,X) \leq \frac{n+k}{n-k} \sum_{i \in \C}d(i,O)$ as desired.
% \end{proof}

\begin{proof}
	We first note that if $O = X$, then we are done, so assume $O\neq X$. 
    %Therefore, assume choosing $O$ under $GM^*$ is not an SNE.
    Let the SNE where $X$ wins have the location profile $(\vec{t};\vec{x}),$ and let $ \vec{x'} = (O, O, \cdots, O)$ consist of $p$ occurrences of $O$. 
    Then, since $X$ is an SNE, let $A = GM^*(\vec{t};\vec{x}_{-L_O},\vec{x'})$, it must be either $A = X$, or there exist some $j \in L_O$ such that $d(j, A) > d(j, X)$, otherwise all clients in $L_O$ could deviate to $O$ and improve their cost respectively. Since all clients in $L_O$ strictly prefer $O$ over $X$ by definition, by Lemma \ref{lemma:lineshift}, we must have either $A = W$ or for all $i \in L_O$, $d(i, A) < d(i, X)$. The latter case is a contradiction to $X$ being an SNE, so it must be that $W = A = GM^*(\vec{t};\vec{x}_{-L_O},\vec{x'})$, meaning that $cost_X(\vec{t};\vec{x}_{-L_O},\vec{x'}) \leq cost_O(\vec{t};\vec{x}_{-L_O},\vec{x'})$.
This means that  
	\[
	\begin{aligned}
		\sum_{i \in T}d(i,X) + \sum_{i \in L_O}d(O,X) + \sum_{i \in L_X}d(x_i,X) &\leq \sum_{i \in T}d(i,O) + \sum_{i \in L_O}d(O,O) + \sum_{i \in L_X}d(x_i,O)\\
		\sum_{i \in T}d(i,X)   &\leq \sum_{i \in T}d(i,O)  + \sum_{i \in L_X}d(x_i,O) \\
        & \quad- \sum_{i \in L_X}d(x_i,X) - \sum_{i \in L_O}d(O,X)
	\end{aligned}
	\]
	Now, by triangle inequality, we have that
	\[
	\begin{aligned}
		\sum_{i \in T}d(i,X) &\leq \sum_{i \in T}d(i,O)  + \sum_{i \in L_X}d(x_i,X) +  \sum_{i \in L_X}d(X,O) - \sum_{i \in L_X}d(x_i,X) - \sum_{i \in L_O}d(O,X)\\
		&= \sum_{i \in T}d(i,O)  + \sum_{i \in L_X}d(X,O)  - \sum_{i \in L_O} d(O,X)\\
		   &= \sum_{i \in T}d(i,O)  + (k-p)\cdot d(X,O)  - p\cdot d(O,X)\\
		   &= \sum_{i \in T}d(i,O)  + (k-2p)\cdot d(X,O) 
	\end{aligned}
	\]
	Then, we can see that
	\[
	\begin{aligned}
		\sum_{i \in T} d(i,X) + \sum_{i \in L} d(i,X) &\leq \sum_{i \in T}d(i,O)  + (k-2p)\cdot d(X,O) + \sum_{i \in L_O} d(i,X) + \sum_{i \in L_X} d(i,X)
	\end{aligned}
	\]
	Recall that $i \in L_X$ prefer $X$ over $O$, this means that $\sum_{i \in L_X} d(i,X) \leq \sum_{i \in L_X} d(i,O)$. Now, combined with triangle inequality, we can see that 
	\[
	\begin{aligned}
		\sum_{i \in T} d(i,X) + \sum_{i \in L} d(i,X) &\leq \sum_{i \in T}d(i,O)  + (k-2p)\cdot d(X,O) + \sum_{i \in L_O} d(i,O) \\
        & \quad + \sum_{i \in L_O} d(X,O) + \sum_{i \in L_X} d(i,O)\\
		&= \sum_{i \in \C}d(i,O)  + (k-2p)\cdot d(X,O) + p\cdot d(X,O)\\
		&= \sum_{i \in \C}d(i,O)  + (k-p)\cdot d(X,O)\\
		&\leq \sum_{i \in \C} d(i, O) + \frac{k-p}{n} \sum_{i \in \C}d(i,X) + \frac{k-p}{n} \sum_{i \in \C}d(i,O)\\
 		&= \frac{n+(k-p)}{n}\sum_{i \in \C} d(i, O) + \frac{k-p}{n} \sum_{i \in \C}d(i,X) 
	\end{aligned}
	\]
	The last inequality holds since $d(X,O) \leq d(i, X) + d(i,O)$ for all $i \in \C$. Now, rearrange the above inequality, and given that $p \geq 0$, we have 
	\[
	\begin{aligned}
		\frac{n-(k-p)}{n} \sum_{i \in \C}d(i,X) &\leq \frac{n+(k - p)}{n}\sum_{i \in \C}d(i,O)\\
		\sum_{i \in \C}d(i,X) &\leq \frac{n+(k-p)}{n-(k-p)} \sum_{i \in \C}d(i,O) \\
		\sum_{i \in \C}d(i,X) &\leq \frac{n+k}{n-k} \sum_{i \in \C}d(i,O).
	\end{aligned}
	\]
	Since the choice of $X$ is arbitrary, we can conclude that both the best and worst SNE would be within a factor of $\frac{n+k}{n-k}$ from the optimal solution, as desired. 
\end{proof}

However, we note that we can also obtain another bound.
\begin{theorem}
If all clients and facilities are located on a line, the strong price of anarchy and the strong price of stability of $GM^*$ are upper bounded by 3.
	\label{thm:sneq3}
\end{theorem}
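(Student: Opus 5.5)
We will show that every SNE winner $X$ is a facility that is ``median-adjacent'' with respect to the \emph{true} locations, and then apply the standard factor-3 argument for such facilities. Keep the notation of Theorem \ref{thm:sneq}: $O$ is the optimum, $X$ is the SNE winner with profile $(\vec{t};\vec{x})$, and $L_O$ is the set of liars strictly preferring $O$. Assume $O\neq X$, and without loss of generality that $O$ lies to the right of $X$. The other direction is symmetric, except for tie-breaking, which we handle as in Lemma \ref{lemma:lineshift}.

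\textbf{Step 1: reduce to a counting inequality via the coalition deviation.} We use the same deviation as in Theorem \ref{thm:sneq}: every client of $L_O$ reports $O$. By Lemma \ref{lemma:lineshift} and the SNE property, the winner remains $X$, so
$cost_X(\vec{t};\vec{x}_{-L_O},\vec{x'})\le cost_O(\vec{t};\vec{x}_{-L_O},\vec{x'})$.
On the line, reports to the left of $X$ contribute exactly $d(X,O)$ to the difference $cost_O-cost_X$, and reports to the right of $O$ contribute exactly $-d(X,O)$. The liars in $L_O$ now sit at $O$, so each contributes $-d(X,O)$. Every liar in $L_X$ truly lies at or left of the midpoint of $[X,O]$, but may report anywhere, so we bound each of their contributions by $+d(X,O)$ in the worst case.

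\textbf{Step 2: a median statement for the true profile.} From this inequality we aim to show that at least half of the clients, counted by their true locations, are at or to the left of the midpoint $M$ of $[X,O]$. The truthful clients contribute $d(X,O)-2\,d(\cdot,\cdot)$-type terms, which are monotone in position. If strictly more than half of the true locations were right of $M$, then the $L_O$ clients (all right of $M$), together with the truthful clients right of $M$, would contribute negatively enough to make $O$ strictly better. This is the one step I am unsure of, so I would try it first. The difficulty is that truthful clients in the open segment $(X,O)$ contribute only partially, not the full $-d(X,O)$. The cleaner route may be to avoid counting altogether and argue directly with costs: first reduce each truthful client inside $(X,O)$ by projecting it to $X$ or to $O$, which keeps both the SNE inequality and the target ratio monotone.

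\textbf{Step 3: the 3-approximation.} Suppose at least $n/2$ clients $S$ satisfy $d(i,X)\le d(i,O)$, which holds for clients weakly left of $M$. For each $i\in S$ we have $d(X,O)\le 2d(i,O)$. For every client, $d(i,X)\le d(i,O)+d(X,O)$. Summing gives
$\sum_{i\in\C} d(i,X)\le \sum_{i\in\C} d(i,O)+n\,d(X,O)$,
and $n\,d(X,O)\le 2\sum_{i\in S}d(X,O)\le 4\sum_{i\in S}d(i,O)$, which only yields $5$. So instead I would use the tighter standard pairing, as in \cite{feldman2016voting}: match each client right of $M$ with a distinct client of $S$. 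For a matched pair $(i,j)$ with $j\in S$, we get $d(i,X)+d(j,X)\le d(i,O)+d(X,O)+d(j,X)\le d(i,O)+3d(j,O)$, using $d(X,O)\le d(j,X)+d(j,O)$ and $d(j,X)\le d(j,O)$. Unmatched clients of $S$ satisfy $d(j,X)\le d(j,O)$. Summing over all clients gives $\sum_{i\in\C} d(i,X)\le 3\sum_{i\in\C} d(i,O)$. Since this holds for every SNE and SNE existence is given by Theorem \ref{thm:sne_line}, both the SPOA and the SPOS are at most 3.

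The main obstacle is Step 2: extracting ``at least half of the true locations prefer $X$'' from the single coalitional inequality, in particular handling the truthful clients between $X$ and $O$ and the ties on the left/right side.
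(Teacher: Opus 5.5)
Your Step 2 is false, not just unproven, so the route through a median of the true locations cannot work. Take only two facilities, $X=0$ and $O=2$, so $M=1$. Put $n-1$ truthful clients at $1+\frac{1}{2(n-1)}$, and one liar ($k=1$) with true location $\frac34$ who reports $0$.
\begin{itemize}
\item The perceived costs are $n-\frac12$ for $X$ and $n+\frac12$ for $O$, so $X$ wins.
\item The only liar strictly prefers $X$, so this is an SNE.
\item The true costs are $n+\frac14$ for $X$ and $n-\frac14$ for $O$, so $O$ is the strict optimum.
\end{itemize}
Yet $n-1$ of the $n$ true locations lie strictly right of $M$. The cause is the one you noticed: a liar in $L_X$ reporting at $X$ contributes a full $d(X,O)$ to $cost_O-cost_X$, while a truthful client just right of $M$ contributes almost nothing. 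One liar can therefore offset arbitrarily many such clients. The projection idea does not repair this. Moving an interior truthful client to $X$ preserves the inequality but lowers the ratio, which is the wrong direction for a reduction. Moving it to $O$ raises the ratio but can destroy the inequality. So your Step 3 pairing argument is correct as a conditional statement, but its premise is unavailable.

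The missing idea is that Step 1 already contains the whole proof, with no counting. Your Step 1 inequality is $\sum_{i\in T}d(i,X)\le\sum_{i\in T}d(i,O)+(k-2p)\,d(X,O)$. Add $d(i,X)\le d(i,O)+d(X,O)$ for $i\in L_O$ and $d(i,X)\le d(i,O)$ for $i\in L_X$. This gives $\sum_{i\in\C}d(i,X)\le\sum_{i\in\C}d(i,O)+(k-p)\,d(X,O)$, which is exactly the intermediate inequality in the proof of Theorem~\ref{thm:sneq}. The excess is only $k-p=|L_X|$ copies of $d(X,O)$, and each copy can be charged to its own member of $L_X$. Such an $i$ weakly prefers $X$, so $d(X,O)\le d(i,X)+d(i,O)\le 2d(i,O)$. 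Hence $(k-p)\,d(X,O)\le 2\sum_{i\in L_X}d(i,O)\le 2\sum_{i\in\C}d(i,O)$, and the bound $3$ follows; this is the paper's argument. In Step 3 you charged $n\,d(X,O)$ against a majority set. In fact only the liars who prefer $X$ ever need to be charged, and they are exactly the clients for whom the factor-2 charge is available.
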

\begin{proof}
	Recall that we have already shown in the proof for Theorem \ref{thm:sneq}
	$$\sum_{i \in \C} d(i,X) \leq \sum_{i \in \C}d(i,O)  + (k-p)\cdot d(X,O)$$
	By triangle inequality and that $i \in L_X$ prefers $X$ over $O$, we have that 
	\[
	\begin{aligned}
		\sum_{i \in \C} d(i,X) &\leq \sum_{i \in \C}d(i,O)  + \sum_{i \in L_X} d(i,O) + \sum_{i \in L_X} d(i,X)\\
		&\leq \sum_{i \in \C}d(i,O)  + \sum_{i \in L_X} d(i,O) + \sum_{i \in L_X} d(i,O)\\
		&\leq \sum_{i \in \C}d(i,O) + \sum_{i \in \C}d(i,O) + \sum_{i \in \C}d(i,O)\\
		& = 3\sum_{i \in \C}d(i,O)
	\end{aligned}
	\]
	as desired.
\end{proof}

Now, combining Theorem \ref{thm:sneq} and Theorem \ref{thm:sneq3}, we can obtain the following result. 
\begin{theorem}
If all clients and facilities are located on a line, the strong price of anarchy and the strong price of stability of $GM^*$ are at most 
\[
\left\{\begin{array}{cc}
\frac{n+k}{n-k} &\text{if  }  k \leq \frac{n}{2} \\
3 &\text{if  } k>\frac{n}{2}
\end{array}\right.
\]
and these bounds are asymptotically tight.
	\label{thm:sneqfull}
\end{theorem}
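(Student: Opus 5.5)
The upper bound follows by taking, for each range of $k$, the better of Theorem~\ref{thm:sneq} and Theorem~\ref{thm:sneq3}. Both bounds hold for every $k$. Moreover, $\frac{n+k}{n-k}\le 3$ is equivalent to $n+k\le 3n-3k$, i.e.\ to $k\le \frac{n}{2}$. So for $k\le\frac n2$ the bound $\frac{n+k}{n-k}$ is the smaller one, and for $k>\frac n2$ the bound $3$ is. This part is routine.

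For tightness I would use a two-facility line instance. Facilities are $O$ at $0$ and $X$ at $2$, with $X$ lower-numbered as the tie-breaking convention requires. All $k$ liars sit at $1+\epsilon$, so each strictly prefers $X$ to $O$. Place $a$ truthful clients at $0$ and $b$ truthful clients at $2$, with $a+b=n-k$. In the profile where every liar reports $X$:
\begin{itemize}
\item the perceived cost of $X$ is $2a$ and of $O$ is $2b+2k$;
\item so $X$ wins whenever $a\le b+k$, using the tie-break if equality holds.
\end{itemize}
This profile is an SNE by a short argument. Only liars can deviate, every liar strictly prefers $X$, and the only other outcome is $O$. Hence no coalition contains a member who gains, and the SNE condition holds trivially.

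The true costs are $2a+k(1-\epsilon)$ for $X$ and $2b+k(1+\epsilon)$ for $O$.
\begin{itemize}
\item For $k\le\frac n2$, take $a=\frac n2$ and $b=\frac{n-2k}{2}$, assuming $n$ even and otherwise rounding. Then $O$ has cost about $n-k$ and $X$ about $n+k$, so the ratio tends to $\frac{n+k}{n-k}$ as $\epsilon\to0$. This is exactly tight.
\item For $k>\frac n2$, the constraint $a\le b+k$ is slack, so take $b=0$ and $a=n-k$. The ratio tends to $\frac{2n-k}{k}$. This equals $3$ at $k=\frac n2$ and approaches $3$ for $k=\frac n2+O(1)$ as $n\to\infty$.
\end{itemize}
I also need to confirm that the profile above is among the SNEs counted by the SPOS. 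This holds because $O$ can never win an SNE here. If $O$ were the outcome, all liars could report $X$ and, since $a\le b+k$, make $X$ win, and every liar strictly prefers $X$. So every SNE selects $X$, and the instance witnesses both SPOS and SPOA.

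The main obstacle is the $k>\frac n2$ regime, where I can only match $3$ asymptotically, i.e.\ as $k/n\downarrow\frac12$. I would present it in that sense, consistent with ``asymptotically tight''. If needed, I would first try to add further facility locations or move the liars' positions to push the ratio closer to $3$ for larger $k$.
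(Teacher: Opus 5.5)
Your proposal is correct and follows essentially the same route as the paper. The upper bound is the same $\min\bigl(3,\frac{n+k}{n-k}\bigr)$ obtained from Theorems~\ref{thm:sneq} and~\ref{thm:sneq3}. Your $k>\frac n2$ instance is essentially the paper's Figure~\ref{fig:3bound} instance, which likewise only reaches $3$ as $n\to\infty$, with $k=\frac n2+1$. For $k\le\frac n2$ you place the truthful clients on the two facilities rather than at a single interior point $B$; this gives the same ratio, and you argue more explicitly than the paper does that every SNE selects the bad facility.

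Two small remarks. First, your $k\le\frac n2$ instance has an exact tie ($2a=2b+2k=n$), and the paper numbers facilities on a line from left to right. So you should mirror it so that $X$ is the left facility, rather than declaring $X$ lower-numbered. The paper's own Figure~\ref{fig:bound} instance has the same exact tie, with the bad facility on the right.

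Second, your worry about $k>\frac n2$ goes away with a small change. Put only $\frac n2$ liars at the midpoint and the remaining $k-\frac n2$ liars at $O$. Each report $r$ of such a liar contributes $d(r,O)-d(r,X)\ge -d(O,X)$ to $cost_O-cost_X$, the same as a truthful report at $O$. Hence $cost_O-cost_X\ge 0$ whatever they report, so $X$ still wins and every SNE selects $X$. This gives ratio $3$ in the limit $\epsilon\to 0$ for every $k\ge\frac n2$.
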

\begin{proof}

We note that by Theorem \ref{thm:sneq} and Theorem \ref{thm:sneq3}, we can conclude that for any $k$, the strong price of anarchy and the strong price of stability are upper bounded by $\min\left(3, \frac{n+k}{n-k} \right)$. Since $\frac{n+k}{n-k} \leq 3$ when $k \leq \frac{n}{2}$ and $\frac{n+k}{n-k} > 3$ when $k > \frac{n}{2}$, we have the upper bound result of the theorem. 
%Let $c = \frac{n}{k}, c \geq 1$, $f(c) = \frac{n+k}{n-k}=\frac{n+n/c}{n-n/c} = \frac{c+1}{c-1}$. We note that $f'(c) = -\frac{2}{(c-1)^2}$, so $f'(c) < 0$ when $c > 1$. This means that when $k < n$, $f(n)$ is monotone decreasing, and $f(c) = 3$ when $c = 2, k = \frac{n}{2}$. This means that when $k > \frac{n}{2}$, $f(c) > 3$, therefore we can conclude that 
%The strong price of anarchy and the strong price of stability are at most 
% \[
% \left\{\begin{array}{cc}
% \frac{n+k}{n-k} & , k \leq \frac{n}{2} \\
% 3 &,  k>\frac{n}{2}
% \end{array}\right.
% \]

We will now show that the above bound is tight. First, consider the case when $k > \frac{n}{2}$. Consider an instance as shown in Figure \ref{fig:3bound}. There are three locations $O, A$ and $N$ on a line, with facility locations $O, N$ and ties resolved in favor of $N$. There are $n/2+1$ clients in $L$ at location $A$ and the rest of the clients in $\C$ at $O$. We also have $d(O,A) = 1 + \epsilon, d(A,N) = 1$, with $\epsilon > 0$ sufficiently small. We note that $O$ is the optimal solution with cost $\frac{n}{2}+ 1 + \left( \frac{n}{2}+ 1 \right)\epsilon$. However, the only SNE has $GM^*$ choosing $N$. Consider the case where all the clients at $A$ report to be at $N$, in which case $GM^*(\vec{t};\vec{x})=N$ with cost $\frac{3n}{2} + \left(\frac{n}{2}-1\right)\epsilon$, with $n \rightarrow \infty, \epsilon \rightarrow 0$, we have that the strong price of anarchy and the strong price of stability are 3 as desired. 

Now, consider the case $k \leq \frac{n}{2}$. Consider an instance as shown in Figure \ref{fig:bound}. There are four locations $O, A, B$ and $C$ on a line, with possible facility locations $O, A$. All $k$ clients in $L$ are at location $C$ and all clients in $T$ are at $B$. We also have $d(O,B) = \frac{n-2k}{2(n-k)}, d(B,C) = \frac{k}{2(n-k)}+\epsilon, d(O,C) = \frac{1}{2} + \epsilon, d(C,A) = \frac{1}{2}-\epsilon$, with $\epsilon > 0$ sufficiently small. We note that $O$ is the optimal solution with cost $\frac{n-k}{2} + \epsilon k$. However, the only SNE has $GM^*$ choosing $A$. Consider the case where all the clients at $C$ report to be at $A$, in which case $GM^*(\vec{t};\vec{x})=A$ with cost $\frac{n+k}{2} - \epsilon k$. With $\epsilon \rightarrow 0$, we have that the strong price of anarchy and the strong price of stability are $\frac{n+k}{n-k}$, as desired. 
\end{proof}

	\begin{figure}[h!]
\centering
\begin{tikzpicture}[every node/.style={scale=1.1}, node distance=2cm]

    % Nodes
\node[draw, rectangle, fill=cyan!30] (O) at (0,0) {O};
\node[draw, circle, fill=cyan!30] (A) at (3,0) {A};
\node[draw, rectangle, fill=green!50] (N) at (6,0) {N};

% Edges
\draw[-] (O) -- (A) node[midway, above] {$1 + \varepsilon$};
\draw[-] (A) -- (N) node[midway, above] {$1$};

% Bottom labels
\node at (0,-1) {$\frac{n}{2} - 1$};
\node at (3,-1) {$\frac{n}{2} + 1$};

\end{tikzpicture}
\caption{A lower bound instance for SPoS and SPoA.
%of a facility location problem. %There are two facility locations $O,N$, with $n/2+1$ clients in $L$ at $A$ and the rest of the clients in $\C$ at $O$ (true locations), with the distance between them labeled.
}
\label{fig:3bound}
\end{figure}

\begin{figure}[h!]
\centering
\begin{tikzpicture}[every node/.style={scale=1.1}, node distance=2cm]

% Nodes
\node[draw, rectangle, fill=green!50] (O) at (0,0) {O};
\node[draw, circle, fill=cyan!30] (B) at (2.5,0) {B};
\node[draw, circle, fill=cyan!30] (C) at (5,0) {C};
\node[draw, rectangle, fill=green!] (A) at (10,0) {A};

% Edges
\draw[-] (O) -- (B);
\draw[-] (B) -- (C) node[midway, above=0.4] {$\dfrac{k}{2(n - k)} +\epsilon$};
\draw[-] (C) -- (A) node[midway, above] {$\frac{1}{2} - \epsilon$};

% Top label
\node at (1.5, 1) {$\dfrac{n - 2k}{2(n - k)}$};

% Bottom labels
\node at (3,-1) {$n - k$};
\node at (5,-1) {$k$};
\end{tikzpicture}
\caption{A lower bound instance for SPoS and SPoA.}
%An instance of a facility location problem. There are two facility locations $O,A$, with all $n-k$ clients in $T$ at $B$, all $k$ clients in $L$ at $C$ (true locations), with the distance between them labeled.}
\label{fig:bound}
\end{figure}

\section{Conclusion}

In this work, we analyzed the properties of equilibrium solutions for strategic facility location. We showed that as long as the fraction of strategic players is not too large, the cost of these solutions can be much better than those created by strategyproof mechanisms, especially for general metric spaces. 

Some clear open questions remain. One is about convergence: while we proved the existence of various equilibrium solutions, it would be nice to analyze which processes and dynamics converge to such solutions for this setting. Another direction involves extending our results to more general equilibrium concepts. In fact, it should be possible to generalize some of our results to $\ell$-strong equilibrium, which is a solution from which no coalition of size $\ell$ can benefit by deviating \cite{epstein2007strong}. Such solutions always exist for our setting as long as $\ell\leq\frac{k}{2}$, and it should be possible to extend our price of anarchy and stability bounds to such solutions. %For example, the price of anarchy for $\ell$-strong equilibrium on a line metric is bounded by $\frac{n+2(k-\ell)}{n-2(k-\ell)}$. 
Quantifying the cost of other notions of stability, such as when all clients are strategic, but can only report a location close to their true location, remains the subject of future work.

%\subsection*{Acknowledgments} We are grateful to Wennan Zhu for formulating the facility location and voting setting with a limited number of liars. 

%%
%% The next two lines define the bibliography style to be used, and
%% the bibliography file.
% \bibliographystyle{ACM-Reference-Format}
\bibliography{refs.bib}

@inproceedings{gravin2025approximation,
  title={Approximation Guarantees of Median Mechanism in $\mathbb{R}^d$},
  author={Gravin, Nikolai and Jia, Jianhao},
  booktitle={Proceedings of the 57th Annual ACM Symposium on Theory of Computing},
  pages={495--506},
  year={2025}
}

@inproceedings{aloupis2010improved,
  author    = {Greg Aloupis and Mordecai Ben-Or and Shiri Chechik and Michal Feldman and Ariel D. Procaccia and Moshe Tennenholtz},
  title     = {Improved Approximation Ratio for Strategyproof Facility Location on a Cycle},
  booktitle = {Proceedings of the 6th International Workshop on Internet and Network Economics (WINE)},
  series    = {Lecture Notes in Computer Science},
  volume    = {6484},
  pages     = {379--390},
  year      = {2010},
  publisher = {Springer},
  doi       = {10.1007/978-3-642-17572-5_30}
}

@article{pathak2008leveling,
  title={Leveling the playing field: Sincere and sophisticated players in the Boston mechanism},
  author={Pathak, Parag A and S{\"o}nmez, Tayfun},
  journal={American Economic Review},
  volume={98},
  number={4},
  pages={1636--1652},
  year={2008},
  publisher={American Economic Association},
DOI = {10.1257/aer.98.4.1636}
}

@article{groseclose2010sincere,
  title={Sincere versus sophisticated voting in Congress: Theory and evidence},
  author={Groseclose, Tim and Milyo, Jeffrey},
  journal={The Journal of Politics},
  volume={72},
  number={1},
  pages={60--73},
  year={2010},
  publisher={Cambridge University Press New York, USA},
doi={10.1017/s0022381609990478}
}

@incollection{lebon2018sincere,
  TITLE = {{Sincere voting, strategic voting : A laboratory experiment using alternative proportional systems}},
  AUTHOR = {Baujard, Antoinette and Igersheim, Herrade and Gavrel, Fr{\'e}d{\'e}ric and Laslier, Jean-Fran{\c c}ois and Lebon, Isabelle},
  URL = {https://shs.hal.science/halshs-01652699},
  BOOKTITLE = {{The Many Faces of Strategic Voting}},
  EDITOR = {John Aldrich and Andr{\'e} Blais and Laura B. Stephenson},
  PUBLISHER = {{The University of Michigan Press}},
  Chapter = {10},
  PAGES = {203-231},
  YEAR = {2018},
  DOI = {10.3998/mpub.9946117},
  HAL_ID = {halshs-01652699},
  HAL_VERSION = {v1},
}

@inproceedings{walsh2021strategy,
  title={Strategy proof mechanisms for facility location at limited locations},
  author={Walsh, Toby},
  booktitle={PRICAI 2021: Trends in Artificial Intelligence: 18th Pacific Rim International Conference on Artificial Intelligence, PRICAI 2021, Hanoi, Vietnam, November 8--12, 2021, Proceedings, Part I 18},
  pages={113--124},
  year={2021},
  publisher={Springer},
address="Cham",
doi={10.1007/978-3-030-89188-6_9}
}

@inproceedings{lam2024proportional,
  author = {Lam, Alexander and Aziz, Haris and Li, Bo and Ramezani, Fahimeh and Walsh, Toby},
title = {Proportional Fairness in Obnoxious Facility Location},
year = {2024},
isbn = {9798400704864},
publisher = {International Foundation for Autonomous Agents and Multiagent Systems},
address = {Richland, SC},
booktitle = {Proceedings of the 23rd International Conference on Autonomous Agents and Multiagent Systems},
pages = {1075–1083},
numpages = {9},
location = {Auckland, New Zealand},
series = {AAMAS '24},
doi={10.5555/3635637.3662963}
}

@inproceedings{chan2021mechanism,
    title     = {Mechanism Design for Facility Location Problems: A Survey},
  author    = {Chan, Hau and Filos-Ratsikas, Aris and Li, Bo and Li, Minming and Wang, Chenhao},
  booktitle = {Proceedings of the Thirtieth International Joint Conference on
               Artificial Intelligence, {IJCAI-21}},
  publisher = {International Joint Conferences on Artificial Intelligence Organization},
  editor    = {Zhi-Hua Zhou},
  pages     = {4356--4365},
  year      = {2021},
  month     = {8},
  doi       = {10.24963/ijcai.2021/596}
}

@article{aziz2022strategyproof,
  title = {Proportionality-based fairness and strategyproofness in the facility location problem},
journal = {Journal of Mathematical Economics},
volume = {119},
pages = {103129},
year = {2025},
issn = {0304-4068},
doi = {https://doi.org/10.1016/j.jmateco.2025.103129},
author = {Haris Aziz and Alexander Lam and Barton E. Lee and Toby Walsh}
}

@inproceedings{feldman2016voting,
  author = {Feldman, Michal and Fiat, Amos and Golomb, Iddan},
title = {On Voting and Facility Location},
year = {2016},
isbn = {9781450339360},
publisher = {Association for Computing Machinery},
address = {New York, NY, USA},
url = {https://doi.org/10.1145/2940716.2940725},
doi = {10.1145/2940716.2940725},
booktitle = {Proceedings of the 2016 ACM Conference on Economics and Computation},
pages = {269–286},
numpages = {18},
location = {Maastricht, The Netherlands},
series = {EC '16}
}

@article{anshelevich2017randomized,
  title={Randomized social choice functions under metric preferences},
  author={Anshelevich, Elliot and Postl, John},
  journal={Journal of Artificial Intelligence Research},
  volume={58},
  pages={797--827},
  year={2017},
  doi={10.5555/3176764.3176784},
  publisher = {AI Access Foundation},
address = {El Segundo, CA, USA},
number = {1},
}

@article{procaccia2013approximate,
  title={Approximate mechanism design without money},
  author={Procaccia, Ariel D and Tennenholtz, Moshe},
  journal={ACM Transactions on Economics and Computation (TEAC)},
  volume={1},
  number={4},
  pages={1--26},
  year={2013},
  publisher = {Association for Computing Machinery},
address = {New York, NY, USA},
volume = {1},
number = {4},
issn = {2167-8375},
doi = {10.1145/2542174.2542175}
}

@article{alon2010strategyproof,
  title={Strategyproof approximation of the minimax on networks},
  author={Alon, Noga and Feldman, Michal and Procaccia, Ariel D and Tennenholtz, Moshe},
  journal={Mathematics of Operations Research},
  volume={35},
  number={3},
  pages={513--526},
  year={2010},
  publisher={INFORMS},
  doi = {10.1287/moor.1100.0457}
}

@article{tang2023strategyproof,
  title={Strategyproof facility location with limited locations},
  author={Tang, Zhong-Zheng and Wang, Chen-Hao and Zhang, Meng-Qi and Zhao, Ying-Chao},
  journal={Journal of the Operations Research Society of China},
  volume={11},
  number={3},
  pages={553--567},
  year={2023},
  publisher={Springer},
doi={10.1007/s40305-021-00378-1}
}

@article{balkanski2024randomized,
  title={Randomized strategic facility location with predictions},
  author={Balkanski, Eric and Gkatzelis, Vasilis and Shahkarami, Golnoosh},
  journal={Advances in Neural Information Processing Systems (NeurIPS)},
  volume={37},
  pages={35639--35664},
  year={2024}
}

@article{filos2024distortion,
  title = {The distortion of distributed facility location},
journal = {Artificial Intelligence},
volume = {328},
pages = {104066},
year = {2024},
issn = {0004-3702},
doi = {https://doi.org/10.1016/j.artint.2024.104066},
author = {Aris Filos-Ratsikas and Panagiotis Kanellopoulos and Alexandros A. Voudouris and Rongsen Zhang}
}

@article{caragiannis2016truthful,
  title={Truthful facility assignment with resource augmentation: An exact analysis of serial dictatorship},
  author={Caragiannis, Ioannis and Filos-Ratsikas, Aris and Frederiksen, S{\o}ren Kristoffer Stiil and Hansen, Kristoffer Arnsfelt and Tan, Zihan},
  journal= {Mathematical Programming},
  year = {2024},
 volume = {203},
number = {1},
pages = {901--930},
doi={10.1007/s10107-022-01902-8}
}

@article{epstein2007strong,
  title = {Strong equilibrium in cost sharing connection games},
journal = {Games and Economic Behavior},
volume = {67},
number = {1},
pages = {51-68},
year = {2009},
issn = {0899-8256},
doi = {https://doi.org/10.1016/j.geb.2008.07.002},
author = {Amir Epstein and Michal Feldman and Yishay Mansour}
}

@article{andelman2009strong,
  title={Strong price of anarchy},
  author={Andelman, Nir and Feldman, Michal and Mansour, Yishay},
  journal={Games and Economic Behavior},
  volume={65},
  number={2},
  pages={289--317},
  year={2009},
  publisher={Elsevier},
  doi = {https://doi.org/10.1016/j.geb.2008.03.005}
}

@article{alon2009strategyproof,
  title={Strategyproof approximation mechanisms for location on networks},
  author={Alon, Noga and Feldman, Michal and Procaccia, Ariel D and Tennenholtz, Moshe},
  journal={arXiv preprint arXiv:0907.2049},
  year={2009}
}

@article{schummer2002strategy,
  title={Strategy-proof location on a network},
  author={Schummer, James and Vohra, Rakesh V},
  journal={Journal of Economic Theory},
  volume={104},
  number={2},
  pages={405--428},
  year={2002},
  publisher={Elsevier},
doi = {https://doi.org/10.1006/jeth.2001.2807}
}

\end{document}